\newcommand{\Z}{\mathbb{Z}}
\newcommand{\R}{\mathbb{R}}
\newcommand{\C}{\mathbb{C}}
\newcommand{\del}{\partial}
\newcommand{\XPlusHat}{X_{\mathrm{cyl}}}
\tikzset{
  XPlus/.pic={
    \draw (0,0.5) -- (4.5,0.5) arc [start angle=150, end angle=-150, radius=1] -- (0,-0.5);
    \node at (5.366,0) {$X_+$};
  },
  XMinus/.pic={
    \draw (0,0.5) -- (-4.5,0.5) arc [start angle=30, end angle=330, radius=1] -- (0,-0.5);
    \node at (-5.366,0) {$X_-$};
  },
  Y/.pic={
    \draw [ultra thick] (0,-0.5) -- (0,0.5);
    \node at (0,0) [right]{$Y$};
  },
  finiteNeckY/.pic={
    \filldraw [very nearly transparent] (-2,0.5) -- (2,0.5) -- (2,-0.5) -- (-2,-0.5);
    \draw [ultra thick] (0,-0.5) -- (0,0.5);
    \draw
      [->] (-2.2,0) -- (2.2,0) node [right]{$u$};
    \draw
      (-2,0.06) -- (-2,-0.06) node [below]{$-4$}
      (2,0.06) -- (2,-0.06) node [below]{$4$};
  },
  infiniteHalfCylinderY/.pic={
    \draw [ultra thick] (0,-0.5) -- (0,0.5);
    \draw (-7,0.5) -- (0,0.5) -- (0,-0.5) -- (-7,-0.5);
    \filldraw [very nearly transparent] (-7,0.5) -- (0,0.5) -- (0,-0.5) -- (-7,-0.5);
    \node at (-5.366,0) {$(-\infty,0) \times Y$};
  },
  verticalX/.pic={
    \draw
      (0,1) arc [start angle=90, end angle=-60, radius=0.25] -- (0.125,0)
      (0,-1) arc [start angle=-90, end angle=60, radius=0.25] -- (0.125,0);
    \draw [dotted]
      (0,1) arc [start angle=90, end angle=240, radius=0.25] -- (-0.125,0)
      (0,-1) arc [start angle=270, end angle=120, radius=0.25] -- (-0.125,0);
  },
  verticalXPlusHat/.pic={
    \draw (0,1) arc [start angle=90, end angle=-60, radius=0.25] -- (0.125,-2);
    \draw [dotted] (0,1) arc [start angle=90, end angle=240, radius=0.25] -- (-0.125,-2);
  },
  infiniteCylinderX/.pic={
    \draw [gray, ->] (-6.5,0) -- (6.5,0) node [right]{$\R$};
    \draw [gray, ->] (0,-1.5) -- (0,1.5) node [above]{$Y$};
    \draw
      (-6,1) -- (6,1)
      (-6,-1) -- (6,-1)
      pic at (-4.5,0) {verticalX}
      pic at (4.5,0) {verticalX};
  },
  infiniteCylinderXPlusHat/.pic={
    \draw [->, gray] (-6.5,0) -- (6.5,0) node [right]{$\R$};
    \draw [->, gray] (0,-2) -- (0,1.5) node [above]{$\XPlusHat$};
    \draw
      (-6,1) -- (6,1)
      pic at (-4.5,0) {verticalXPlusHat}
      pic at (4.5,0) {verticalXPlusHat};
  },
}
\begin{document}

\begin{flushright}
  OU-HET-1073
  \end{flushright}

\title{Mod-two APS index and domain-wall fermion\thanks{} 
}



\author{Hidenori Fukaya         \and
  Mikio Furuta \and Yoshiyuki Matsuki \and Shinichiroh Matsuo \and Tetsuya Onogi\and Satoshi Yamaguchi\and Mayuko Yamashita  
}


\institute{H. Fukaya, Y. Matsuki, T. Onogi, S. Yamaguchi\at
              Department of Physics, Osaka University, Osaka, Japan \\
              \email{hfukaya@het.phys.sci.osaka-u.ac.jp}\\
              \url{http://www-het.phys.sci.osaka-u.ac.jp/~hfukaya/}\\
              \email{ymatsuki@het.phys.sci.osaka-u.ac.jp}\\
\email{onogi@phys.sci.osaka-u.ac.jp}\\
  \email{yamaguch@het.phys.sci.osaka-u.ac.jp}
           \and
           M. Furuta \at
           Graduate School of Mathematical Sciences, The University of Tokyo, Tokyo, Japan \\
           \email{furuta@ms.u-tokyo.ac.jp}
           \and
           S. Matsuo \at
           Graduate School of Mathematics, Nagoya University, Nagoya, Japan\\
           \email{shinichiroh@math.nagoya-u.ac.jp}\\
           \url{https://www.math.nagoya-u.ac.jp/~shinichiroh/}
           \and
           M. Yamashita \at
           Research Institute for Mathematical Sciences, Kyoto University, Kyoto, Japan\\
           \email{mayuko@kurims.kyoto-u.ac.jp}
}

\date{}

\maketitle

\begin{abstract}
  We reformulate the mod-two Atiyah-Patodi-Singer (APS) index
  in a physicist-friendly way using the domain-wall fermion.
  Our new formulation is given on a closed manifold, which is
  extended from the original manifold with boundary,
  where we instead give a fermion mass term
  changing its sign at the location of the original boundary.
  This new setup does not need the APS boundary condition, which is non-local.
  A mathematical proof of equivalence between the two different formulations
  is given by two different evaluations of the same
  index of a Dirac operator on a higher dimensional manifold.
  The domain-wall fermion allows us to separate the edge and bulk
  mode contributions in a natural but not in a gauge invariant way,
  which offers a straightforward description of the global anomaly inflow.
\keywords{Index theorem \and domain-wall fermion \and anomaly inflow}
\end{abstract}

\section{Introduction}
\label{intro}

Anomaly \cite{Adler:1969gk,Bell:1969ts} 
has played an important role in studying the low-energy dynamics of
gauge theories, since it is always caused by (nearly) massless fields
that describes the infra-red physics.
As the anomaly is related to topology and thus invariant under
the renormalization group flow, we can obtain non-trivial
consequences which cannot be analyzed by perturbation.
For example, 't Hooft \cite{tHooft:1979rat} 
showed that the breaking pattern of
the chiral symmetry in QCD is quite limited. 
Also, from anomaly in electro-weak interaction of quarks,
we can determine the coefficient of
the Wess-Zumino-Witten term \cite{Wess:1971yu,Witten:1983tw} 
in pion effective Lagrangian, which agrees well with experiments.

If a theory has anomaly in its gauge invariance,
the theory is considered to be inconsistent, 
and cannot describe physics.
However, the inconsistency may be cured by
extending the theory to higher dimensions.
For example, the anomalous four-dimensional chiral fermion
can be embedded to five-dimensional vector-like gauge theory.
In such cases, the anomaly is identified as
the gauge current absorbed into
the extra-dimensions \cite{Callan:1984sa}. 
This is called the anomaly inflow \cite{Witten:2015aba},
which is recently widely studied not only in particle physics
\cite{Kurkov:2018pjw,Tachikawa:2018njr,Vassilevich:2018aqu,Garcia-Etxebarria:2018ajm,Yonekura:2019vyz,Hsieh:2020jpj,Hamada:2020mug}
but also in condensed matter physics
\cite{Gromov:2015fda,Metlitski:2015yqa,Seiberg:2016rsg,Tachikawa:2016xvs,Freed:2016rqq,Yu:2017uqt,Hasebe:2016tjg,Yonekura:2018ufj,Yao:2019ggu}.


Let us call the massless fermion on the original
(even-dimensional) manifold the edge mode,
and that lives in extra-dimension the bulk mode, which is massive or gapped.
The anomaly inflow caused by the edge mode
is cancelled by the bulk mode.
This phenomenon matches well with the so-called
bulk-edge correspondence \cite{Hatsugai} 
of topological insulators.
When the bulk fermion
has anomaly on the boundary, the edge mode
having the same anomaly with opposite sign must appear.
This realization of the bulk-edge correspondence
is valid for interacting fermions.

In \cite{Witten:2015aba}, 
the notion of the global anomaly is
extended using the anomaly inflow.
The traditional argument on the global anomaly \cite{Witten:1982fp} is
given by one-parameter family of gauge fields
which connects two gauge equivalent configurations
in $d$ dimensions. One can treat this one-parameter
as an extra dimension and when the extended $d+1$-dimensional
theory has a non-trivial topology,
the phase of the chiral fermion determinant cannot be
uniquely determined.
From the anomaly inflow point of view,
this standard set-up is limited to a torus called a mapping torus.
In \cite{Witten:2015aba,Witten:2016cio,Wang:2018qoy}, it was claimed that the global anomaly
should be extended to the case of any $d+1$-dimensional manifold.
If the phase of the fermion determinant depends on the
structure of the bulk manifold, we should regard the
theory anomalous in that the phase 
cannot be uniquely determined with $d$-dimensional information alone.

More concretely, the anomaly inflow is generalized 
by the $\eta$ invariant of a Dirac operator on the $d+1$-dimensional manifold \cite{Dai:1994kq,Freed:2014iua},
where a nontrivial boundary condition known as the APS boundary
condition \cite{APS} is imposed. (See \cite{Yonekura:2016wuc} for a physical description.)
However, the appearance of the APS boundary condition
is somewhat a puzzle in physics as it is non-locally imposed,
and therefore, it is unlikely to be realized in any physical fermion systems.
Moreover, the APS boundary condition allows no edge-localized mode
to exist, which makes it difficult to
separate the $\eta$-invariant into bulk and edge contributions.

Yonekura and Witten \cite{Witten:2019bou}
gave a physical reason why the
APS boundary condition should be introduced
rotating the normal direction to the surface
to the ``time'' direction
and regarding the APS condition as an intermediate state.
The unphysical properties cancel out between bra and ket states.
But their set-up loses spacial boundary and
it is still difficult to understand the role of the edge or bulk
modes separately in the total $\eta$ invariant.

We have been investigating more physicist-friendly alternative
understanding of the anomaly inflow without introducing
the unphysical boundary conditions.
For pseudo-real fermions, the $\eta$ invariant
is reduced to an integer called the APS index \cite{APS}.
In Refs. \cite{Fukaya:2017tsq,Fukaya:2019qlf,Fukaya:2019myi}
we succeeded in reformulating this APS index
using the domain-wall fermions \cite{Callan:1984sa,Jackiw:1975fn,Kaplan:1992bt}.
We add an ``outside'' to the original boundary and consider
a closed manifold in which the two domains are separated
by a wall. Interestingly, only the half region is shared by the 
manifold on which the original APS index is formulated
but the same index is obtained.
Although the location of the domain-wall coincides
with the boundary for the original APS, the boundary conditions
imposed on fermions are totally different.

For the mod-two version of the APS index, the issue is
more difficult, because the index can not be expressed by
any integral of local curvature functions, and  
no natural way is known to
separate the edge and bulk contributions.
As already mentioned, the APS boundary condition
allows no edge-localized mode to exist.

In this work, we extend our formulation in \cite{Fukaya:2017tsq,Fukaya:2019qlf}
to the mod-two type index which describes the sign of
the real fermion determinant.
We will show below a mathematical relation between
the domain-wall fermion determinant defined a closed manifold
to the APS index of the massless Dirac operator
given on the half of the manifold with boundary,
whose location coincides the domain-wall.
In contrast to our previous work limited to even dimensional bulk,
this work can apply to any dimensions.

The rest of the paper is organized as follows.
In Sec.\ref{sec:global}, we review the global anomaly
originally found in  \cite{Witten:1982fp}, as well as
recent development leading to the mod-two APS index,
which however, requires an unphysical boundary condition.
In Sec.\ref{eq:domainwall}, we summarize our previous
work where we achieved an alternative expression of the
standard APS index using domain-wall fermions 
without introducing any non-local conditions.
Then we mathematically prove that the mod-two APS index
can also be expressed by the domain-wall fermion
Dirac operator in Sec.\ref{sec:main},
and describe how the bulk-edge correspondence of
the anomaly is embedded in the index in Sec.\ref{sec:inflow}.
In Sec.\ref{sec:summary}, we give a summary and
discuss possible applications to higher-order topological
insulators and lattice gauge theory.

\section{Review of global anomaly}
\label{sec:global}
In this section, we review the global anomaly,
where the mod-two index theorem plays a key role.
Starting from the Witten's $SU(2)$ anomaly \cite{Witten:1982fp},
we also discuss a modern view of the anomaly
as the current inflow to the higher dimensional bulk.
In this point of view, the anomaly can be identified
as the $\eta$ invariant of the massless Dirac operator
on a manifold with boundary, as it was shown in \cite{Dai:1994kq,Freed:2014iua}
that the $\eta$ invariant satisfies properties
required to describe the topological field theory
on the manifold, which appears as an effective action
of the bulk fermions.

The mod-two APS index naturally appears as a
special case of the $\eta$ invariant.
However, it requires a non-local boundary condition
on the fermion fields, which cannot be
directly applied to the physical fermion systems.

\subsection{Global anomaly and mod-two index}

In \cite{Witten:1982fp}, the first example
of global gauge anomaly was shown, in which the sign of a
Weyl fermion path-integral in the fundamental representation
of the $SU(2)$ gauge group cannot be determined
in a gauge invariant way.
The same discussion applies to general Weyl or Majorana fermions
whose Dirac operator is real and anti-symmetric.

Let us consider a real Dirac operator $D_X$ on a manifold $X$
and assume that it has no zero eigenvalue.
The complex conjugate\footnote{In this work, we denote the complex conjugate by the superscript $*$
  and the Hermitian conjugate by $\dagger$.} of $D_X$ is given as $D_X^*=C D_X C^{-1}$
with a unitary symmetric operator $C$~\footnote{
$C$ may contain a non-trivial operator on the gauge fields.
  }.
Every non-zero eigenvalue of it makes a $\pm$ pair
since for $D_X\phi = i\lambda \phi$, we have $D_XC^{-1}\phi^* =-i\lambda C^{-1}\phi^*$
(where $\lambda$ is real).
The Weyl or Majorana fermion Lagrangian is expressed as
\begin{eqnarray}
\mathcal{L}=\frac{1}{2}\psi^T CD_X \psi,
\end{eqnarray}
with a Grassmannian variable $\psi$.
One can choose a basis so that $C=1$
and $D_X$ is real anti-symmetric operator.
In this basis, the path-integral is 
the Pfaffian of $D_X$, or ${\rm Pf} D_X$, which ends up with
a product of half of eigenvalues taking one from all eigenvalue pairs.
Since $\det D_X = ({\rm Pf} D_X)^2$ is real and positive,
${\rm Pf} D_X$ is real. This means that there is no perturbative
gauge anomaly, always  appearing as a variation in the complex phase.
The sign of the ${\rm Pf} D_X$ is, thus,
the only possible source of the anomaly, which is essentially nonperturbative.

Let us consider two gauge equivalent configurations
$A$ and $A^g$
smoothly connected by a one-parameter family,
say, parameterized by $s$: $A^s = (1-s)A + s A^g$.
Here, the configuration $A^g$ is obtained from $A$
by a $SU(2)$ gauge transformation $g$.
Since $A$ and $A^g$ are gauge equivalent,
exactly the same spectrum of the Dirac eigenvalues is shared.
However, some pairs of eigenvalues may be exchanged
crossing zero somewhere in $0<s<1$, which is called the spectral flow.
As ${\rm Pf} D_X$ is determined by only half of
the eigenvalue pairs, if this spectral flow is odd,
${\rm Pf} D_X$ changes its sign.

Identifying the infinity in $\mathbb{R}^4$ as one point,
or compactifying the spacetime to $S^4$,
the gauge transformations are classified by the homotopy group $\pi_4(SU(2))=\mathbb{Z}_2$.
In \cite{Witten:1982fp}, it was shown that 
when the gauge transformation is in the nontrivial class of $\pi_4(SU(2))$,
the eigenvalues must change the sign
by odd times and the spectral flow is odd.
Therefore, the sign of ${\rm Pf} D_X$ is not determined
in a gauge invariant way.

The proof was given using the mod-two Atiyah-Singer (AS) index. 
The one-parameter family $s$ given above can be treated as
the fifth dimension, to which the gauge connection $A^s$
is naturally introduced.
As the two points $s=0$ and $s=1$ are gauge equivalent,
the extended spacetime we consider is equivalent to
$S^4\times S^1$, which is called a mapping torus.
On this mapping torus, the Dirac operator $D$ is still real,
and the number of zero modes mod 2 is
known as the mod-two AS index.
It was proved that the mod-two AS index always agrees with
the spectral flow of original four-dimensional $D_X$ as follows.
Let us introduce another one-parameter family $t$,
which connects $A^0$ at $t=-\infty$ and $A^1$ at $t=\infty$, where
the $t$ dependence is mild.
The zero modes of $D$ satisfies
\begin{equation}
\label{eq:eigeneq}
\partial_{t}\Psi(t,x)=-\gamma^{t}D_X(t)\Psi(t,x),
\end{equation}
where $\gamma^{t}$ is the gamma matrix in the $t$-direction.
In this adiabatic situation, 
the solution is approximated by $\Psi(t,x)=\phi(t)\psi_{t}(x)$,
where $\psi_{t}(x)$ satisfies the four dimensional Dirac equation
$\gamma^{t}D_X(t)\psi_{t}(x)=\lambda(t)\psi_{t}(x)$, with the eigenvalue
$i\lambda(t)$ of $D_X(t)$ at the time slice $t$.
The solution $\phi(t)$ is formally given by
\begin{equation}
  \phi(t)=\phi(0)\exp\left[-\int_{0}^{t}dt^{\prime}\lambda (t^{\prime}) \right],
\end{equation}
but this is normalizable only when $\lambda(t) >0\  \text{for}\  t \to \infty$
and $\lambda(t) <0\  \text{for}\  t \to -\infty$.
Therefore, the number of zero modes of $iD$
always agrees with the spectral flow of $D_X$.
It was also shown that the index is always odd for $A^t$ when
the gauge transformation $g$ is in the nontrivial class of $\pi_4(SU(2))$.

This standard argument of global anomaly is
similar to the anomaly inflow of the perturbative anomalies
in that the extra dimension and associated Dirac operator are introduced.
However, as the extra direction introduced is limited to $S^1$,
it is difficult to treat the
original Weyl fermion as the edge localized mode of the total system.
The physical role of the bulk massive fermion is not obvious, either.
In fact, in the next subsection,
the notion of global anomaly is extended to incorporate
general bulk manifold attached to the original spacetime.
In the mathematical language, the extension
is from the mod-two AS index to the mod-two APS index.


\subsection{Global anomaly inflow (from mod-two AS to mod-two APS)}

To understand the anomaly inflow, it is instructive to
go back to the perturbative anomaly.
It is well-known that a single Weyl fermion in a complex
representation of $SU(N)$ ($N>2$) gauge interactions
suffers from anomaly and the theory is inconsistent.
However, the anomaly is exactly the same as
the surface term of the variation of the
Chern-Simons (CS) action and therefore,
the gauge invariance can be recovered by adding
a five-dimensional bulk fermion whose effective action
contains the CS action to cancel the anomaly of the Weyl fermion.
In this extension, known as the Callan-Harvey mechanism \cite{Callan:1984sa}
the original anomaly can be regarded
as a current escaping into the extra dimension, which is, in total,
conserved in the five-dimensional system.

The extended theory is still ``anomalous'' since the
theory is no more defined on the original four-dimensional manifold.
The theory is anomaly free when (the total sum of) CS action is zero.

In \cite{Witten:2015aba}, it was argued that
the Callan-Harvey mechanism can be applied to the global anomaly, as well.
The anomalous $n$-dimensional fermion path integral can be cured
by extending the theory to $(n+1)$-dimensions where
the total phase is given by $\exp(i\pi \eta(iD))$,
where $\eta(iD)$ is the $\eta$ invariant of the Dirac operator $iD$,
on the extended manifold \cite{Dai:1994kq,Freed:2014iua}.
Here, the $\eta$ invariant of a Hermitian operator $H$ is given by
a regularized sum of the sign of all eigenvalues $\lambda$,
\begin{equation}
  \eta(H) = \sum_\lambda \text{sgn} \lambda + h,
\end{equation}
where $h$ is the number of zero modes
(namely, we count the zero modes as positive eigenvalues.).
As the CS action is a perturbative part of the
$\eta$ invariant, the perturbative anomaly is
properly included in this anomaly inflow argument.

The $\eta$ invariant is gauge invariant and the total theory
is, thus, gauge invariant.
The theory is still ``anomalous'', since the
theory is no more defined on the original $n$-dimensional manifold $X$,
but depends on the extended $(n+1)$-dimensional bulk.
The theory is anomaly-free or consistent as a $n$-dimensional
theory, only when the (total) $\eta$ invariant
is independent of the bulk. 
Using the gluing property of the $\eta$ invariant,
this anomaly-free condition is simply given by $\eta=0$ (mod $2$)
on any closed manifold which is constructed by
gluing two $(n+1)$-dimensional manifolds sharing $X$,
the same $n$-dimensional boundary. 

When $D$ is real,
the $\eta$ invariant is reduced to
the number of zero modes, $h$
(Remember that all non-zero modes have $\pm$ pairs.).
Namely, this index is the mod-two APS index on a
$(n+1)$-dimensional manifold with the $n$-dimensional boundary.
The notion of anomaly is extended in that
we can put any $(n+1)$-dimensional bulk,
in contrast to the traditional global anomaly limited to
the mapping torus\footnote{
  Even in the framework of the mapping  torus,
  a new-type of anomaly in the 4-dimensional
  $SU(2)$ gauge theory was found \cite{Wang:2018qoy}.
}.

As is the case with perturbative anomaly,
if we can relate $\exp(i\pi \eta(iD))$ to
the path-integral of the massive fermion,
we may be able to unite the notion of anomaly as
the symmetry breaking of the $n$-dimensional
massless edge modes, which is cancelled by
the bulk massive fermions.
However, this is not straightforward since
the definition of $\eta(iD)$ requires
a special type of boundary condition,
known as the APS boundary condition,
to guarantee the Hermiticity of $iD$.



\subsection{Non-local boundary condition}

In the previous subsection, we have introduced
$\eta(iD)$, which describes
the phase of the fermion path-integral
in $(n+1)$-dimension in a gauge invariant way.
When $iD$ acts on a field in a real representation,
the mod-two APS index appeared as a special case
of the $\eta$ invariant whose non-zero eigenvalues cancel out.
But we have not discussed in detail what kind of
boundary conditions should be imposed on the $(n+1)$-dimensional fermions.

Before going into details, let us discuss yet another
fermion species, or those in a pseudo-real representation under
$Spin(n)$ and other symmetry group transformations.
This pseudo-real fermion is special in that it allows the mass term.
Therefore, any kind of gauge anomaly can be essentially removed by
the Pauli-Villars regularization, for example.
However, if we ``require'' the time-reversal ($T$) symmetry,
which is known to be incompatible with gauge symmetry in odd dimensions
and for odd number of Dirac fermions,
the situation is exactly the same as the previous complex and real fermion examples.
The gauge invariance needs bulk fermions.

In the pseudo-real fermion case, it is more natural
to consider the anomaly inflow as the one for $T$ symmetry, rather than gauge anomaly.
Let us consider a three-dimensional manifold $X$
and massless Dirac fermion with the $SU(N)(N>2)$ gauge interaction
on it, as an example,
\begin{eqnarray}
  \lim_{\Lambda \to \infty}\det \frac{D_X}{D_X+\Lambda}
  = \lim_{\Lambda \to \infty}\prod_\lambda \frac{i\lambda}{i\lambda+\Lambda}
  \propto \exp\left[-i\frac{\pi}{2}\eta(iD_X)\right],
\end{eqnarray}
where we have employed the Pauli-Villars regularization and
$i\lambda$ denotes the eigenvalue of $D_X$.
The $\eta$ invariant appears since the
phase of the determinant is essentially
given by how many times $i$ and $-i$ are multiplied,
which correspond to the number of positive $\lambda$
and negative $\lambda$, respectively.
The $T$ symmetry is broken as the $T$
transformation flips the sign of the mass $\Lambda$,
and thus the sign of $\eta(iD_X)$.

It is known that the smooth part
of $\eta(iD_X)$ is the Chern-Simons action,
half of which coincides the surface term
of the instanton number density integrated over
a four-dimensional manifold $Y$, whose boundary
is the original three-dimensional manifold $X$.
Thus we can add the bulk fermion so that
its effective action becomes this instanton number density.
The total phase
\begin{equation}
  \exp\left[i\pi
    \left\{
    P
    -\frac{1}{2}\eta(iD_X)\right\}\right]
   =\exp(i\pi \mathcal{I}),
\end{equation}
where $P$ an integral of local function of curvature\footnote{
  In four-dimensional flat space, it is well-known that
  $P=\frac{1}{32\pi^2}\int_Y d^4x \epsilon^{\mu\nu\sigma\rho}{\rm Tr}F_{\mu\nu}F_{\nu\rho}$.
  } over $Y$, 
is now guaranteed to be $T$ invariant,
as $\mathcal{I}$ is an integer known as the APS index.
The APS index $\mathcal{I}=n_+-n_-$ is defined by the
number of zeromodes $n_\pm$ with positive/negative chirality, respectively,
of the Dirac operator $iD$ on $Y$.
This index is again a special case of 
$\eta(iD)=h=n_++n_-=\mathcal{I}+2n_-$,
where $2n_-$ is irrelevant to the fermion determinant phase.
Note that the non-zero modes of $iD$ make $\pm$ pairs
by the chirality operator and do not contribute.
This APS index beautifully explains the bulk-edge correspondence
of the topological insulator where the $T$ symmetry is protected
by cancellation of the $T$ anomaly.

Now let us go back to the boundary condition of $D$.
For a complete set $\{\phi_i\}$ of the operand of $D$,
a natural choice would be $\gamma_\tau \phi_i|_X := n^\mu\gamma_\mu\psi_i|_X =\pm \phi_i|_X$,
where $n^\mu$ is a normal vector to the surface $X$.
This condition is local and respects rotational symmetry of $X$ when it exists.
However, this condition spoils the anti-Hermiticity of $D$
by the surface contribution as
\begin{eqnarray}
  \label{eq:antiH}
\int_Y d^{n+1}x \varphi_2^\dagger(x)D\varphi_1(x) + \int_Y d^{n+1}x (D\varphi_2(x))^\dagger\varphi_1(x) 
= \int_X d^{n}x \varphi_2^\dagger(x)\gamma_\tau \varphi_1(x),
\nonumber\\
\end{eqnarray}
for general $\varphi_1(x), \varphi_2(x)$ satisfying the same boundary condition.

Instead, the original work by APS \cite{APS} chooses
a different boundary condition, known as the APS boundary condition.
Assuming a product structure in the metric near the boundary,
and denoting the Dirac operator as $D= \gamma_\tau (\partial^\tau + A)$,
they require the boundary modes to satisfy\footnote{A more mathematical definition will be given in Sec.~\ref{sec:main}}
\begin{eqnarray}
\frac{A+|A|}{|A|}\varphi_i(x)|_{X} = 0. 
\end{eqnarray}
As $A$ anticommutes with $\gamma_\tau$, the surface contribution
in Eq.~(\ref{eq:antiH}) disappears to keep the anti-Hermiticity of $D$.
Moreover, $A$ commutes with the chirality operator and
the index of $D$ in terms of the chiral zeromodes is well-defined.
In \cite{Dai:1994kq,Freed:2014iua} a modified version was used,
but the essential properties of APS are inherited.

However, as discussed in details in \cite{Fukaya:2017tsq},
the APS boundary condition is unnatural and
unlikely to be realized in the materials.
In particular, the boundary condition has little relation to
the physics of topological insulators.
Let us examine if the edge localized solution $\exp(-\lambda \tau)$
can exist near the boundary.
If $\lambda$ is an eigenvalue of $A$, the Dirac equation holds.
But the solution is  normalizable only when
$\lambda$ is positive, which is not allowed by the APS boundary condition.
Namely, the APS condition prohibits the edge-localized modes to exist.
This makes it difficult to understand the bulk-edge correspondence
or anomaly inflow, in particular, in the mod-two APS index,
as it has no intuitive separation of the bulk and edge contributions,
in contrast to the $T$ anomaly inflow in the standard APS index,
or the perturbative gauge anomaly inflow of complex fermions.
It is also unnatural to lose the rotational symmetry
at the surface due to the gauge field dependence of $A$.
Above all, the operator $|A|$ is non-local, which
makes the causal structure of the system doubtful.

In \cite{Witten:2019bou}, Witten and Yonekura
explained how these unphysical properties of the APS boundary condition
are harmless when we consider the topological phase of materials.
They rotated the normal direction to the ``time'',
and treated the APS condition as an intermediate ``state''.
If the gap of the system is big enough,
the overlap between the physical boundary state
and the APS state is controlled by the ground state of the system
and the unphysical features of the APS cancel out
between the bra and ket states.
Their argument justifies
the use of the APS boundary condition in the physical system.
However, the fundamental question why the index or $\eta$ invariant
with such a unphysical property appears in the materials
is not clear.

A natural question is whether we can reformulate the APS index, or
$\eta$ invariant without relying on the non-local boundary condition.
To this question, a positive answer was partly
given in our previous works.
The key idea is to consider the so-called domain-wall fermion,
as discussed in the next section.



\section{Domain-wall fermion and standard APS index}
\label{eq:domainwall}

In the original work by Callan and Harvey \cite{Callan:1984sa},
where the anomaly inflow was first discussed,
they considered a spacetime $Y$ without any boundary,
rather than a manifold with boundary. 
Instead, they introduced a space-dependent fermion mass
(as a vacuum expectation value of scalar field) 
whose sign flips at some co-dimension one manifold $X$,
which divides $Y$ into $Y_+ \cup Y_-$.
Here $Y_\pm$ denotes the region with positive/negative fermion mass.
This is the so-called domain-wall fermion.
As we will see below, the domain-wall fermion is a good model to
describe the physics of topological insulators.
The region $Y_-$ corresponds to inside of topological insulator
and $Y_+$ is normal insulator.
This setup is more realistic than a manifold with boundary,
since any boundary in our world has ``outside'' of it.

Let us assume that $Y$ is an odd-dimensional manifold,
and $X$ is located at $\tau=0$ with
a simple product structure of the metric of $Y$ near $X$. 
Then the Dirac equation becomes
\begin{eqnarray}
0 = (D+m\kappa)\psi = \left(\gamma_\tau\partial_\tau + D_X + m\kappa \right)\psi =0,
\end{eqnarray}
where $\kappa={\rm sgn}(\tau)$ is a sign function such that
${\rm sgn}(\pm t)=\pm 1 $ for $t>0$, $D_X$ is the Dirac operator on $X$,
and $m>0$ is a real constant.
At  the leading order of adiabatic approximation
assuming slow $\tau$ dependence of the gauge field,
the above equation has an edge-localized solution \cite{Jackiw:1975fn}:
\begin{eqnarray}
\psi(x,\tau) = \phi(x)\exp(-m|\tau|),\;\;\; \gamma_\tau \phi(x)=\phi(x),\;\;\; D_X\phi(x)=0,
\end{eqnarray}
where $x$ is a local coordinate of $X$.
The last two conditions show that the edge mode has positive chirality,
and satisfies the massless Dirac equation on the domain-wall $X$.

In \cite{Callan:1984sa}, it was shown that
the edge-localized modes suffer from gauge anomaly,
but it is precisely canceled by the surface term of
the Chern-Simons action appearing as an effective action 
of the massive bulk modes in the region $Y_-$.
As the total massive Dirac fermion determinant in $Y$
can be regularized in a gauge invariant way, with Pauli-Villars fields, for instance,
this anomaly cancellation is guaranteed at all order of perturbation.
See \cite{Fukaya:2020trp} for a recent recomputation of this anomaly cancellation
in a more microscopic and subtle treatment of edge and bulk modes near the domain-wall.

In our recent work \cite{Fukaya:2017tsq,Fukaya:2019qlf}, we have successfully
described the anomaly inflow using the domain-wall fermion
when $Y$ is an even-dimensional manifold.
Let us consider a determinant of the domain-wall fermion
with Pauli-Villars regulator
\begin{eqnarray}
  \label{eq:DWdet}
\frac{\det (D+\kappa m)}{\det (D+m)} = \frac{\det i\gamma(D+\kappa m)}{\det i\gamma(D+m)},
\end{eqnarray}
where we have taken the physical mass and the Pauli-Villars
mass the same value for simplicity.
The sign function $\kappa$ again takes $\pm 1$ on $Y_\pm$.
Thanks to the existence of the chirality operator $\gamma$,
the determinant is always real since  $\det (D+\kappa m)=\det \gamma ( D+\kappa m)\gamma=\det ( D^\dagger+\kappa m).$

From the right-hand side of Eq.~(\ref{eq:DWdet}), one can see that
the sign of the determinant is controlled by the $\eta$ invariant of
the Hermitian operators $\gamma ( D+\kappa m)$ and $\gamma ( D+m)$.
And it coincides with the APS index ${\rm Ind}_{\rm APS} D$,
on the half of the manifold $Y_-$ with the APS boundary condition is imposed on $X$.
Namely, we have
\begin{eqnarray}
  {\rm Ind}_{\rm APS} D|_{Y_-} = -\frac{1}{2}\eta(\gamma ( D+\kappa m))+\frac{1}{2}\eta(\gamma ( D+m)).
\end{eqnarray}
This nontrivial equivalence  was perturbatively shown by three of the present authors \cite{Fukaya:2017tsq}.
Then the other three of the present authors who are mathematicians joined
the collaboration and gave a mathematical proof \cite{Fukaya:2019qlf}
that the agreement is not a coincidence but generally true 
on any even-dimensional curved manifold when $m$ is large enough.

In our reformulation of the APS index,
we put the Dirac operator on a closed even-dimensional
manifold $Y$, which ensures the anti-Hermiticity of $D$,
and no specific boundary condition is needed.
Instead, the local and rotational symmetric
boundary condition is automatically given on the domain-wall.
We have shown that the boundary $\eta$ invariant
entirely comes from the edge-modes localized on the wall,
and the curvature integral part in the index is from the bulk modes.
Thus, the bulk-edge correspondence is manifest in our reformulation.
The non-local feature of the boundary $\eta$ invariant
is also naturally explained by the masslessness of the edge modes.
This formulation is so physicist-friendly that
even the application to the lattice gauge theory is achieved \cite{Fukaya:2019myi}.

In this work, we pursue the mod-two version
of APS index, which applies to the real fermions in odd dimensions.
The most general case with complex fermions
is still an open question, although
we expect a similar relation between
the domain-wall fermion and $\eta(D)$ with the APS boundary condition.

\section{Main theorem}
\label{sec:main}
Here we describe our main theorem using a rather mathematically precise language. 
The physics consequence is discussed in the next section.

\subsection{Mod-two APS indices}\label{subsec_APS_def}
In this subsection, 
for completeness, we will define the mod-two APS index for real skew-adjoint operators on manifolds with boundaries, which is a slight modification of the original APS index \cite{APS} for self-adjoint operators on manifolds with boundaries\footnote{
The definition of the mod-two APS index is easy, but it is important that a mod-two version of the APS index {\it theorem} does not exist.  
}. 
$\mathcal{H}_\R$ denotes a separable real Hilbert space, and $\mathcal{H}_\C := \mathcal{H}_\R \otimes \C$ its complexification.  
A $\C$-linear operator $D$ on $\mathcal{H}_\C$ is called real if it coincides with its complex conjugate, and skew-adjoint if $D^\dagger = -D$. 
The complexification of an $\R$-linear operator $D$ on $\mathcal{H}_\R$ is also denoted by $D$, which is a real operator. 
The spectrum of an $\R$-linear operator on $\mathcal{H}_\R$ is always understood to be the spectrum of its complexification. 

Recall that, for a real skew-adjoint Fredholm operator $D$ on a separable real Hilbert space, the dimension mod $2$ of its kernel is a deformation invariant \cite{AS69}. 
So we define its {\it mod-two index} by
\begin{align*}
    \mathrm{Ind}(D) := \dim \ker D \pmod 2. 
\end{align*}
For a closed manifold equipped with a real vector bundle, the mod-two index of a skew-adjoint elliptic operator is defined in the above way and studied by the mod-two index theorem of Atiyah and Singer \cite{AS71}. 
Here we would like to formulate the mod-two APS index for the case with boundaries. 

Let $Y_-$ be a compact Riemannian manifold with boundary $X = \del Y_-$, and $S$ be a real Euclidean vector bundle over $Y_-$. 
We assume the collar structure $(-\epsilon, 0]_\tau \times X$ near the boundary of $Y_-$, and there exists a real Euclidean vector bundle $S_X$ over $X$ with the isomorphism $S_X \simeq S$ over the collar. 
We assume that $S_X$ is equipped with a self-adjoint endomorphism $\gamma_X \in \mathrm{End}(S_X)$ with $\gamma_X^2 = \mathrm{id}_{S_X}$. 
Let $D$ be a $\R$-linear formally skew-adjoint elliptic operator on $C^\infty(Y_-; S)$. 
Assume that, on the collar, $D$ is of the form
\begin{align*}
    D = \gamma_X \del_\tau + D_X, 
\end{align*}
for some $\R$-linear skew-adjoint elliptic operator $D_X$ on $C^\infty(X; S_X )$ which anti-commutes with $\gamma_X$, i.e., $\gamma_X D_X + D_X \gamma_X = 0$. 
In order to define the mod-two APS index, we assume that $D_X$ {\it is invertible}\footnote{The APS boundary condition is defined also in the case where $D_X$ has a nontrivial kernel, but the resulting operator is not skew-adjoint. }. 

In this setting, the APS boundary condition defined in \cite{APS} is the following.  
Note that $\gamma_X D_X$ is self-adjoint on $L^2(X; S_X)$. 
Let $P:= \chi_{[0, \infty)}(\gamma_X D_X)$ denote the spectral projection onto the non-negative eigenspaces of $\gamma_XD_X$. 
\begin{definition}[{the APS boundary condition (\cite{APS}) and mod-two APS indices}]
In the above settings, a smooth section $f \in C^\infty(Y_-; S)$ satisfies the APS boundary condition if it satisfies
\begin{align*}
    Pf|_{X} = 0. 
\end{align*}
The closure of this operator on $L^2(Y_-; S)$ with the above boundary condition, still denoted by $D$, is Fredholm. 
Moreover, if $D_X$ is invertible, $D$ is skew-adjoint. 
We define the mod-two APS index $\mathrm{Ind}_{\rm APS}(D) \in \Z_2$ of $D$ by its mod-two index. 
\end{definition}
The mod-two APS indices have another formulation as follows. 
We consider $Y_{\rm cyl}:= Y_- \cup [0,+\infty)\times X $
with the standard cylindrical-end metric.
The bundle $S$ and the operator $D$ naturally extend to
$Y_{\rm cyl}$, which is denoted by $S_{\rm cyl}$ and $D_{\rm cyl}$.
\begin{proposition}[{\cite[Proposition 3.11]{APS}}]\label{prop_APS=cyl}
If $D_X$ is invertible, $D_{\rm cyl}$ is a skew-adjoint Fredholm operator on $L^2(Y_{\rm cyl}; S_{\rm cyl})$. 
Let us denote by $\mathrm{Ind}(D_{\rm cyl})$ its mod-two index. 
We have
\begin{align*}
    \mathrm{Ind}_{\rm APS}(D) = \mathrm{Ind}(D_{\rm cyl}). 
\end{align*}
\end{proposition}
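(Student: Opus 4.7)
My plan is to prove the proposition in three stages: (i) show that $D_{\rm cyl}$ is a skew-adjoint Fredholm operator on $L^2(Y_{\rm cyl}; S_{\rm cyl})$, (ii) describe $\ker D_{\rm cyl}$ concretely via separation of variables on the cylindrical end, and (iii) match $\dim \ker D_{\rm cyl}$ with $\dim \ker D_{\rm APS}$ modulo two.

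For (i), skew-adjointness follows from $Y_{\rm cyl}$ being a complete Riemannian manifold without boundary: the formally skew-adjoint elliptic operator $D_{\rm cyl}$ is essentially skew-adjoint on $C_c^\infty(Y_{\rm cyl}; S_{\rm cyl})$ by a Chernoff-type finite-propagation argument. For the Fredholm property, the key input is the invertibility of $D_X$, equivalently a uniform spectral gap $|\mu| \geq c > 0$ of the self-adjoint $A := \gamma_X D_X$ on $L^2(X; S_X)$. Expanding in an eigenbasis $\{\phi_\mu\}$ of $A$, the equation $D_{\rm cyl} f = g$ decouples on the cylinder into ODEs $a_\mu'(\tau) + \mu a_\mu(\tau) = \tilde g_\mu(\tau)$; the spectral gap yields uniformly $L^2$-bounded solution operators with exponentially decaying Schwartz kernels. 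Patching these with an interior elliptic parametrix on the compact piece $Y_-$ produces a full parametrix for $D_{\rm cyl}$ modulo compact operators---the standard Melrose b-calculus / Lockhart--McOwen setup.

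For (ii), the same ODE analysis forces any $f \in \ker D_{\rm cyl}$ to take the form $f(\tau, x) = \sum_{\mu > 0} a_\mu(0)\, e^{-\mu\tau}\,\phi_\mu(x)$ on the cylindrical part, since the $\mu < 0$ modes would grow exponentially and violate $L^2$. Consequently the restriction $f \mapsto f|_{Y_-}$ injects $\ker D_{\rm cyl}$ into $C^\infty(Y_-; S)$ with image $\{g : Dg = 0,\ g|_X \in V_+\}$, where $V_+ := \mathrm{im}\,\chi_{(0,\infty)}(A)$. By contrast, $\ker D_{\rm APS}$ is by definition $\{g : Dg = 0,\ g|_X \in V_-\}$ with $V_- := \mathrm{im}\,\chi_{(-\infty,0)}(A)$.

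The third stage, where the main obstacle lies, is to reconcile these two kernels modulo two: they are cut out by opposite halves of the spectrum of $A$, so no naive restriction or extension gives an isomorphism. My preferred strategy is to recast both in terms of the Cauchy data space $C := \{g|_X : g \in C^\infty(Y_-; S),\ Dg = 0\}$, so that $\ker D_{\rm cyl} \cong C \cap V_+$ and $\ker D_{\rm APS} \cong C \cap V_-$. Green's formula for the formally skew-adjoint $D$ yields a boundary bilinear form $\beta(u, v) := \langle \gamma_X u, v\rangle$ on $L^2(X; S_X)$, with respect to which $V_\pm$ and $C$ are all maximally isotropic (the anticommutation $\gamma_X A + A\gamma_X = 0$ handles $V_\pm$, and skew-adjointness of $D$ handles $C$), while ellipticity together with invertibility of $D_X$ makes $(C, V_\pm)$ Fredholm pairs. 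The mod-two equality $\dim(C \cap V_+) \equiv \dim(C \cap V_-) \pmod 2$ should then follow by exhibiting a continuous path of maximally isotropic subspaces from $V_+$ to $V_-$ built via the involution $\gamma_X$ and invoking homotopy invariance of the $\Z_2$-index of real skew-adjoint Fredholm operators along the associated family. The delicate point is to ensure that the Fredholm-pair property with $C$ persists along this path, where the spectral gap of $A$ supplied by the invertibility of $D_X$ re-enters decisively.
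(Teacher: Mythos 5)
Your stages (i) and (ii) are sound and in fact already contain the entire content of the proposition, which the paper does not prove but simply quotes from Atiyah--Patodi--Singer: the point of \cite[Proposition 3.11]{APS} is that the $L^2$-decay condition on the attached half-cylinder \emph{is} the APS boundary condition, so that restriction to $Y_-$ (with inverse given by the explicit exponential extension you wrote down) is an isomorphism of kernels, not merely a mod-two coincidence. The mismatch you then set out to repair in stage (iii) --- $\ker D_{\rm cyl}\cong C\cap V_+$ versus $\ker D_{\rm APS}\cong C\cap V_-$ --- is an artifact of a sign slip in the paper's conventions: with the collar $(-\epsilon,0]_\tau\times X$ (so that $Y_-$ lies on the side $\tau\le 0$ and the cylinder is attached along $\tau\ge 0$), the APS condition transcribed from the original reference is $\chi_{(-\infty,0]}(\gamma_XD_X)f|_X=0$, i.e.\ $f|_X\in V_+$, not $V_-$. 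You should have flagged this and resolved it by matching conventions, after which the proof terminates at the end of your stage (ii).

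Stage (iii) as written is not salvageable, for a reason visible in your own setup. The Calder\'on subspace $C$ of $Y_-$ has the same principal symbol as the spectral projection onto $V_-$ (the model solutions on the half-space $\tau\le 0$ that stay bounded in the interior direction are the $e^{-\mu\tau}\phi_\mu$ with $\mu<0$), so the Calder\'on projector differs from $\chi_{(-\infty,0)}(A)$ by a compact operator. Consequently $(C,V_+)$ is a Fredholm pair, but $(C,V_-)$ is \emph{not}: writing $C$ as a graph over $V_-$, the sum $C+V_-$ has infinite codimension, so the boundary value problem with condition $f|_X\in V_-$ is not Fredholm and ``$\dim\ker$ mod $2$'' is not even defined for it. Your claim that ellipticity plus invertibility of $D_X$ makes both $(C,V_\pm)$ Fredholm pairs is therefore false for the minus sign, and the proposed homotopy of maximal isotropics from $V_+$ to $V_-$ cannot stay within the Fredholm regime, since it already fails at the endpoint. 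Even in finite-dimensional models there is no identity of the kind stage (iii) needs: the parities of $\dim(L\cap V_+)$ and $\dim(L\cap V_-)$ differ whenever $V_+$ and $V_-=\gamma_XV_+$ lie in different components of the Grassmannian of maximal isotropics for the form $\langle\gamma_X\cdot,\cdot\rangle$.
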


\subsection{The statement of the main theorem}

Let $Y$ be a closed Riemannian manifold of which dimension can be odd or even.
Let $S$ be a real Euclidean vector bundle on $Y$. 
Let $D:C^\infty (Y;S)\to C^\infty (Y;S)$ be a first-order, formally skew-adjoint,
elliptic partial differential operator.
Let $X\subset Y$ be a separating submanifold that decomposes $Y$ into two compact manifolds
$Y_+$ and $Y_-$ with common boundary $X$.
Let $\kappa: Y \to [-1,1]$ be the $L^\infty$--function such that $\kappa\equiv \pm 1$
on $Y_\pm \backslash X$.
We define $D_{\rm DW}=D+\kappa m {\rm id}_S$ with a real positive number $m$
as a domain-wall Dirac operator, where ${\rm id}_S$ is an identity matrix on $S$.
We also define $D_{\rm PV}=D + m {\rm id}_S$, of which determinant
corresponds to the Pauli-Villars regulator.

We assume that $X$ has a collar neighborhood isometric to the standard
product $(-4,4)\times X$ and satisfying $((-4,4)\times X) \cap Y_- = (-4,0]\times X$.
  We denote the coordinate along $(-4,4)$ by $\tau$.
  We assume the collar structure on $S$ and $D$ explained in subsection \ref{subsec_APS_def}. 
  

In the collar region, $D_{\rm DW}$ can be written as
\begin{equation}
 D_{\rm DW}= \gamma_X(\partial_X + \gamma_X \kappa m {\rm id}_S + \gamma_X D_X ). 
\end{equation}
For $m$ large enough, $D_{\rm DW}$ is invertible. 
This can be shown in the same way as \cite[Proposition 9]{Fukaya:2019qlf}, and can be understood as follows. 
In the large $m$ limit, we have edge-localized modes proportional to
$\exp(-m|\tau|)$ in a $\gamma_X=+1$ subspace, on which
the domain-wall Dirac operator  operates as $D_{\rm DW}=DP_+$,
where $P_+=(1+\gamma_X)/2$. When $D_X$ at $\tau=0$ has no zero eigenvalue,
it is guaranteed that $D_{\rm DW}$ is invertible.

\begin{theorem}\label{theorem: main theorem}
  If $D_X$ on $C^\infty(X; S_X) $ is invertible, 
  then there exists a constant $m_0 > 0$ that depends only on $X$, $S$, and $D$ such that
  for any $m > m_0$ we have,
  \begin{equation}
    \label{eq:maintheorem}
    {\rm Ind}_{\rm APS}(D|_{Y_-}) = \frac{1-{\rm sgn}\det(D_{\rm DW}D_{\rm PV}^{-1})}{2} \pmod{2}.
  \end{equation}
  Were, 
  ${\rm sgn}\det(D_{\rm DW}D_{\rm PV}^{-1})$ will be defined in Definition \ref{def_signdet} below. 
\end{theorem}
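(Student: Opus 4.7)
The plan is to adapt the two-evaluations scheme of \cite{Fukaya:2019qlf} from the integer APS case to the mod-two setting: construct a real skew-adjoint Fredholm operator $\mathcal{D}$ on a higher-dimensional manifold, and compute $\mathrm{Ind}(\mathcal{D}) \in \Z_2$ in two distinct ways so that one evaluation matches each side of~(\ref{eq:maintheorem}).

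First, I would reinterpret $\mathrm{sgn}\det(D_{\rm DW}D_{\rm PV}^{-1})$ as a mod-two spectral-flow invariant. Introduce
\[
    D_s := D + \bigl((1-s) + s\kappa\bigr)m\,\mathrm{id}_S, \qquad s \in \R,
\]
extended to be constant equal to $D_{\rm PV}$ for $s \le 0$ and $D_{\rm DW}$ for $s \ge 1$. Each $D_s$ is $\R$-linear, so its non-real eigenvalues occur in complex-conjugate pairs $\{\mu,\bar\mu\}$ contributing $|\mu|^2>0$ to the regularized determinant, and only real eigenvalues influence the sign. Consequently $\mathrm{sgn}\det(D_{\rm DW}D_{\rm PV}^{-1}) = (-1)^{\mathrm{SF}_2}$, where $\mathrm{SF}_2 \in \Z_2$ is the parity of real eigenvalues of $D_s$ crossing zero as $s$ varies over $[0,1]$, so the right-hand side of~(\ref{eq:maintheorem}) equals $\mathrm{SF}_2$.

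Second, I would build $\mathcal{D}$ on $\R_s \times Y$ as a suspension of $\{D_s\}$. On the doubled bundle $S \oplus S$, take $2\times 2$ real generators $\sigma_x$, $\sigma_z$ (symmetric, square $+I$) and $i\sigma_y$ (skew, square $-I$) that pairwise anticommute, and set
\[
    \mathcal{D} := \sigma_x \partial_s + \sigma_z D + (i\sigma_y) M_s, \qquad M_s := \bigl((1-s)+s\kappa\bigr)m.
\]
A direct check, using $D^T = -D$ and $M_s^T = M_s$, shows that $\mathcal{D}$ is real and skew-adjoint, and the Clifford relations give $\mathcal{D}^2 = \partial_s^2 + D^2 - M_s^2 \le 0$ with elliptic principal symbol. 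Fredholmness follows from invertibility of $D_s$ at $s = \pm\infty$. A $\Z_2$-refinement of the Atiyah--Singer spectral-flow theorem then identifies $\mathrm{Ind}(\mathcal{D}) \equiv \mathrm{SF}_2 \pmod 2$, yielding one evaluation.

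Third, I would evaluate $\mathrm{Ind}(\mathcal{D})$ by a large-$m$ edge-localization analysis, adapting \cite[Proposition~9]{Fukaya:2019qlf}. Since $M_s = m$ on $\R_s \times Y_+$ while $M_s = (1-2s)m$ on $\R_s \times Y_-$, the mass term gaps $\mathcal{D}$ exponentially away from the effective domain wall $\R_s \times X \cup \{s=1/2\}\times Y_-$. A Jackiw--Rebbi transverse reduction on this locus identifies the real kernel of $\mathcal{D}$ with the $L^2$-kernel of $D$ on the cylindrical-end manifold $Y_{\rm cyl} = Y_- \cup_X \bigl([0,\infty)\times X\bigr)$; by Proposition~\ref{prop_APS=cyl}, the latter has mod-two dimension equal to $\mathrm{Ind}_{\rm APS}(D|_{Y_-})$, producing the other evaluation.

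The principal technical obstacle I anticipate is executing the edge-mode reduction as a genuine isomorphism of \emph{real} Hilbert spaces (not merely equality of complex dimensions) uniformly along the family, even as real eigenvalues cross zero and complex-conjugate pairs collide — both generic phenomena for paths of real operators. Because the mod-two index can jump under perturbations that preserve complex dimension but change parity, the whole argument must be kept inside the real orthogonal category, and the uniform resolvent and spectral-gap estimates of \cite{Fukaya:2019qlf} must be sharpened so that no mod-two discrepancy can arise near the critical values of $s$.
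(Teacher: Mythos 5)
Your proposal follows essentially the same route as the paper's proof: the paper likewise recasts ${\rm sgn}\det(D_{\rm DW}D_{\rm PV}^{-1})$ as a mod-two spectral flow of the doubled family $D_t$ on $S\oplus S$ (Propositions \ref{prop_sf=signdet} and \ref{prop_sf=signdet_odd_elliptic}), equates that spectral flow with the mod-two index of the suspension $\hat{D}_m=\gamma\del_t+D_t$ on $\R\times Y$ (Proposition \ref{prop_APS=sf}) --- your $\mathcal{D}$ is this operator in a different Clifford basis --- and then evaluates that index by localizing onto an embedded copy of $Y_{\rm cyl}$ and invoking Proposition \ref{prop_APS=cyl}. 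The only substantive difference is that where you argue heuristically that conjugate eigenvalue pairs cannot affect the sign of a regularized determinant, the paper makes this step rigorous by defining ${\rm sgn}\det$ via the two connected components of $\mathcal{C}(\mathcal{H}_\R)$ and using the Carey--Phillips--Schulz-Baldes mod-two spectral flow, which is precisely the ``stay inside the real orthogonal category'' discipline you correctly flag as the main technical obstacle.
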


Here, ``$\rm{sgn} \det$'' in the right hand side of \eqref{eq:maintheorem} needs an explanation, because the operator $D_{\rm DW}D_{\rm PV}^{-1}$ is defined on infinite-dimensional Hilbert space. 
Note that the real invertible operator $D_{\rm DW}D_{\rm PV}^{-1}$ differs from the identity operator by a compact operator. 
For such operators, we define ``$\rm{sgn} \det$'' which generalizes the usual signature of the determinants of invertible real operators on finite-dimensional Hilbert spaces as follows. 
For a real Hilbert space $\mathcal{H}_\R$, let 
\begin{align}
    \mathcal{C}(\mathcal{H}_\R) := \{A \in {\rm Id}_{\mathcal{H}_\R} + \mathcal{K}(\mathcal{H}_\R) \ | \ 
    A \mbox{ is invertible }\},
\end{align}
where $\mathcal{K}(\mathcal{H}_\R) $ denotes the space of compact operators on $\mathcal{H}_\R$. 
The space $\mathcal{C}(\mathcal{H}_\R)$, equipped with the norm topology, has two connected components \cite[Proposition 3.3]{AS69}. 

\begin{definition}[{${\rm sgn}\det$}]\label{def_signdet}
We define a map
\begin{align*}
    {\rm sgn}\det \colon \mathcal{C}(\mathcal{H}_\R) \to \{1, -1\}
\end{align*}
by letting ${\rm sgn}\det(A) := 1$ if $A$ belongs to the same connected component of $\mathcal{C}(\mathcal{H}_\R)$ with the identity, and ${\rm sgn}\det(A) := -1$ otherwise. 
\end{definition}
This map is a generalization of the ``$\rm{sgn} \det$'' for finite-dimensional case. 
Indeed, if $A \in \mathcal{C}(\mathcal{H}_\R)$ is of the form
$A = A_V \oplus \mathrm{id}_{V^\perp}$ for some finite dimensional subspace $V \subset \mathcal{H}_\R$, 
then the value ${\rm sgn}\det(A)$ defined in Definition \ref{def_signdet} coincides with the signature of the determinant of $A_V$.

\subsection{Example on a closed manifold}

Before giving a general proof, let us  consider  a special case
with $Y_-=Y$ or $\kappa=-1$ on whole $Y$
and there is no domain-wall.
In this case, we obtain the mod-two AS index on whole $Y$.
\begin{corollary}\label{col:AStheorem}
For any $m > 0$, we have
  \begin{equation}
    {\rm Ind}_{\rm AS}(D) = \frac{1-{\rm sgn}\det\left[(D-m{\rm id}_S  )(D+m{\rm id}_S )^{-1}\right]}{2} \pmod{2},
  \end{equation}
  where ${\rm Ind}_{\rm AS}(D)=\dim \ker(D) \pmod{2}$.
\end{corollary}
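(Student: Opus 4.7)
The plan is a direct spectral computation rather than a specialisation of Theorem~\ref{theorem: main theorem}. Since $D$ is real, formally skew-adjoint and elliptic on the closed manifold $Y$, its complexification has discrete, purely imaginary spectrum, and reality of $D$ pairs every nonzero eigenvalue $i\lambda$ with its conjugate $-i\lambda$ of the same multiplicity. The left-hand side is therefore $\dim\ker D \pmod 2$, so it suffices to prove
\begin{equation*}
  \mathrm{sgn}\det(A) = (-1)^{\dim\ker D}, \qquad A := (D - m\,\mathrm{id}_S)(D + m\,\mathrm{id}_S)^{-1}.
\end{equation*}

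First I would verify that $A$ lies in $\mathcal{C}(\mathcal{H}_\R)$ with $\mathcal{H}_\R = L^2(Y;S)$, so that $\mathrm{sgn}\det(A)$ is defined. Invertibility of $D \pm m\,\mathrm{id}_S$ is immediate since $\pm m \notin i\R$; the rewriting
\begin{equation*}
  A = \mathrm{id} - 2m(D + m\,\mathrm{id}_S)^{-1}
\end{equation*}
displays $A - \mathrm{id}$ as compact, because an elliptic operator on a closed manifold has compact resolvent.

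Next I would diagonalise. Decompose $\mathcal{H}_\R$ orthogonally as $\ker D \oplus \bigoplus_{\lambda > 0} V_\lambda$, where each real $D$-invariant block $V_\lambda$ is spanned by the real and imaginary parts of a complex eigenfunction $\phi_\lambda$ with $D\phi_\lambda = i\lambda\phi_\lambda$. The operator $A$ preserves this splitting. On $\ker D$ it acts as $-\mathrm{id}$, while on each $V_\lambda$ its complexified eigenvalues are $(i\lambda - m)/(i\lambda + m)$ and its conjugate, whose product equals $(\lambda^2 + m^2)/(\lambda^2 + m^2) = 1$; hence $A|_{V_\lambda} \in \mathrm{SL}(2,\R)$.

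The step requiring the most care is passing from these block data to the intrinsic $\mathrm{sgn}\det$ of Definition~\ref{def_signdet}, which is governed by connected components of $\mathcal{C}(\mathcal{H}_\R)$. I would build a norm-continuous homotopy within $\mathcal{C}(\mathcal{H}_\R)$ from $A$ to the operator $B$ that is $-\mathrm{id}$ on $\ker D$ and $\mathrm{id}$ on its complement, by choosing for each $V_\lambda$ a path in $\mathrm{SL}(2,\R)$ from $A|_{V_\lambda}$ to $\mathrm{id}_{V_\lambda}$. Path-connectedness of $\mathrm{SL}(2,\R)$ supplies such paths, and the uniform norm decay $\|A|_{V_\lambda} - \mathrm{id}\| \to 0$ as $\lambda \to \infty$ (visible from the formula for $A - \mathrm{id}$) lets one choose the family so that $A_t - \mathrm{id}$ remains compact along the entire homotopy. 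Since $B$ lies in the identity component of $\mathcal{C}(\mathcal{H}_\R)$ exactly when $\dim\ker D$ is even, we obtain $\mathrm{sgn}\det(A) = \mathrm{sgn}\det(B) = (-1)^{\dim\ker D}$, completing the proof.
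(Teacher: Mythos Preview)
Your argument is correct and follows essentially the same spectral route as the paper: decompose $L^2(Y;S)$ into $\ker D$ and the two-dimensional real blocks coming from eigenvalue pairs $\pm i\lambda$, observe that $A$ acts as $-\mathrm{id}$ on the kernel and with determinant $1$ on each block, and conclude $\mathrm{sgn}\det A=(-1)^{\dim\ker D}$. The paper presents this as a one-line formal infinite product $\prod_\lambda(i\lambda-m)/(i\lambda+m)$, whereas you additionally verify $A\in\mathcal{C}(\mathcal{H}_\R)$ and construct an explicit homotopy in $\mathcal{C}(\mathcal{H}_\R)$ to a finite-rank perturbation of the identity, which is what is actually needed to match Definition~\ref{def_signdet}; this extra care is appropriate and fills a gap the paper leaves implicit.
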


This corollary can be easily checked by a direct evaluation
of the massive fermion determinant.
Remembering that every non-zero eigenvalue $i\lambda$ of $D$
makes a pair with another eigenvalue $-i\lambda$ (where $\lambda$ is real)
the ratio of the determinant is expressed as
\begin{align*}
    \det\left[(D-m{\rm id}_S  )(D+m{\rm id}_S )^{-1}\right]
  &= \frac{(-m)^{N_0}\prod_{\lambda>0}(\lambda^2+m^2)}{m^{N_0}\prod_{\lambda>0}(\lambda^2+m^2)}\\
  &= (-1)^{N_0},
\end{align*}
where $N_0$ is the number of zero modes, or $N_0=Ind_{\rm AS}(D)$ mod 2.

\subsection{Mathematical preparations: mod-two spectral flows and indices on cylinders}
In this subsection, we give mathematical preparations necessary for the proof of the main theorem. 
In  \cite{Carey},  Carey, Phillips and Schulz-Baldes introduced mod-two spectral flow for paths of real skew-adjoint Fredholm operators. 
After recalling it and its necessary properties, we relate it with ``$\mathrm{sgn}\det$'' in Definition \ref{def_signdet}. 
We also explain its relation with mod-two indices of operators on cylinders. 

We have to deal with unbounded operators on Hilbert spaces. 
Unbounded operators appearing below are always assumed to be closed and densely defined. 
We topologize the set of unbounded closed densely defined Fredholm operators by the Riesz topology (see for example \cite{Nicolaescu} for topologies on the space of unbounded Fredholm operators). 
In this topology, a family $\{D_x\}_{x \in X}$ of Fredholm operators is continuous if and only if the families $\{D_x(1 + D^\dagger_x D_x)^{-1/2}\}_{x \in X}$ and $\{D^\dagger_x(1 + D_x D_x^\dagger)^{-1/2}\}$ are both continuous with respect to the norm topology. 
Restricted to the subspace of bounded Fredholm operators, it coincides with the norm topology. 
 
Now, we recall the definition of mod-two spectral flows for continuous paths of real skew-adjoint operators following \cite{Carey}. 
The spectrum of a real skew-adjoint operator $D$ lies in $\sqrt{-1}\R$, and satisfies $\mathrm{Spec}(D) = -\mathrm{Spec}(D)$. 
In generic cases, the mod-two spectral flow counts the parity of the number of changes in the orientation of the eigenfunctions at eigenvalue crossings through $0$ along the path. 

First assume that $\mathcal{H}_\R$ is finite dimensional. 
Given two invertible real skew-adjoint operators $D_{-1}$ and $D_1$, the mod-two spectral flow between them is defined as follows. 
Choose an operator $A$ on $\mathcal{H}_\R$ such that
\[
D_{1} = A^\dagger D_{-1} A. 
\]
Then the mod-two spectral flow in the finite-dimensional case is simply
\begin{align}\label{eq_fin_dim_sf}
    \mathrm{Sf}(D_{-1}, D_1) := \frac{1 - {\rm sgn}\det(A)}{2} \in \Z_2. 
\end{align}

Next we pass to the infinite-dimensional case. 
The definition in \cite{Carey} is given for bounded families, but it is straightforward to extend it to the unbounded case\footnote{
In \cite{Carey20} the authors extend the definition of spectral flows to paths of operators with general Clifford symmetries. 
There, they also define spectral flows for paths of unbounded Fredholm operators. }. 
The precise definition of mod-two spectral flow consists of subdividing a path into pieces small enough, and applying the definition for finite-dimensional paths for each pieces. 
Assume we are given a continuous family $\{D_t\}_{t \in [a, b]}$ of real skew-adjoint Fredholm operators on $\mathcal{H}_\R$, parameterized by a finite interval $[a, b] \subset \R$. 
We assume that $D_a$ and $D_b$ are invertible. 
For $\lambda >0$ and $t \in [a,b]$, we define the corresponding spectral projection by
\begin{align*}
    Q_\lambda(t) := \chi_{(-\lambda, \lambda)} (\sqrt{-1}D_t),  
\end{align*}
where $\chi_{(-\lambda, \lambda)}$ is the characteristic function of $(-\lambda, \lambda)$. 
$Q_\lambda(t)$ is a real projection. 
By Fredholmness of $D_t$, for $\lambda$ small enough $Q_\lambda(t) \mathcal{H}_\R$ is a finite dimensional subspace of $\mathcal{H}_\R$. 
For each $t$, take an arbitrary skew-adjoint operator $R_t$ on the kernel of $Q_\lambda(t)D_tQ_\lambda(t)$. 
Let us denote
\begin{align*}
    D_t^{(\lambda)} :=  Q_\lambda(t)D_tQ_\lambda(t) + R_t. 
\end{align*}
This is a real skew-adjoint invertible operator on $Q_\lambda(t) \mathcal{H}_\R$. 

We choose a subdivision of the interval as $a = t_0 < t_1 < \cdots < t_N = b$,
and a sequence of positive numbers $\{\lambda_n\}_{n=1}^N$ such that $Q_{\lambda_n}(t)$ is of constant finite rank on the interval $[t_{n - 1}, t_n]$ for all $n$, and
the orthogonal projection
\begin{align*}
    V_n \colon Q_{\lambda_n}(t_{n - 1}) \mathcal{H}_\R \to Q_{\lambda_n}(t_{n }) \mathcal{H}_\R 
\end{align*}
is a bijection for all $n$. 
Using these data, the mod-two spectral flow of the path $\{D_t\}_t$ is defined as follows. 

\begin{definition}[Mod-two spectral flows][{\cite[Definition 4.1]{Carey}}]\label{def_sf}
Let $\{D_t\}_{t \in [a, b]}$ be a continuous path of real skew-adjoint possibly unbounded Fredholm operators on $\mathcal{H}_\R$. 
We assume that $D_a$ and $D_b$ are both invertible. 
Choosing additional datum as above, we define the spectral flow of the path $\{D_t\}_{t \in [a, b]}$ by
\begin{align*}
    \mathrm{Sf}(\{D_t\}_t) := \sum_{n = 1}^N \mathrm{Sf}(D_{t_{n-1}}^{(\lambda_n)}, 
    V_n^\dagger D_{t_n}^{(\lambda_n)}V_n). 
\end{align*}
 
\end{definition}

For an unbounded path $\{D_t\}_{t \in [a, b]}$, we can also take the bounded transform
$\{D_t (1 + D_t^\dagger D_t)^{-1/2}\}_{t\in [a, b]}$ to get a bounded path, and consider its mod-two spectral flow. 
We easily see that
\begin{align}\label{eq_bdd_transf}
    \mathrm{Sf}(\{D_t\}_{t }) =  \mathrm{Sf}(\{D_t (1 +D^\dagger_t D_t)^{-1/2}\}_{t }). 
\end{align}

\subsubsection{The case of paths consisting of bounded operators}
In this subsubsection, we deal with paths consisting of bounded operators. 
We relate ``${\rm sgn} \det$'' in Definition \ref{def_signdet} with a certain type of mod-two spectral flows. 

In general, spectral flows are not determined by the operators at the endpoints, but depend on the choice of the paths. 
However, continuous deformations of the paths do not change the spectral flows, as long as they fix the endpoints (\cite[Theorem 4.3]{Carey}). 
This implies the following. 
\begin{lemma}\label{lem_cpt_perturbation}
Given two bounded paths $\{D_t\}_{t \in [a, b]}$ and $\{D'_t\}_{t \in [a, b]}$ satisfying the conditions in Definition \ref{def_sf}, 
assume $D_a = D'_a$, $D_b = D'_b$, and that $D_t - D'_t$ is a compact operator for all $t \in [a, b]$. 
Then we have
\begin{align*}
    \mathrm{Sf}(\{D_t\}_t) = \mathrm{Sf}(\{D'_t\}_t). 
\end{align*}
\end{lemma}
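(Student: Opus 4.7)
\textbf{Proof plan for Lemma \ref{lem_cpt_perturbation}.} The plan is to exhibit an explicit homotopy between the two paths through admissible paths (i.e.\ continuous paths of real skew-adjoint bounded Fredholm operators whose endpoints are invertible) and invoke the homotopy invariance of the mod-two spectral flow (\cite[Theorem 4.3]{Carey}).

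The natural candidate is the straight-line homotopy
\[
  D_t^{(s)} := (1-s) D_t + s D'_t, \qquad (s,t) \in [0,1] \times [a,b].
\]
First I would verify that for each fixed $s$, the path $\{D_t^{(s)}\}_{t \in [a,b]}$ is a legitimate path to which Definition \ref{def_sf} applies. Linearity gives that $D_t^{(s)}$ is real and skew-adjoint. Since $D'_t - D_t$ is compact by hypothesis, we have $D_t^{(s)} = D_t + s(D'_t - D_t)$, a compact perturbation of the Fredholm operator $D_t$, hence Fredholm. Continuity of the two-parameter family in norm is immediate from boundedness and the assumed continuity of the two original paths. Finally, at the endpoints $t=a$ and $t=b$ the hypothesis $D_a = D'_a$ and $D_b = D'_b$ gives $D_a^{(s)} = D_a$ and $D_b^{(s)} = D_b$ for every $s$, so the endpoint operators are invertible throughout the homotopy, as required.

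With this homotopy in hand, the homotopy invariance of the mod-two spectral flow for families whose endpoints remain invertible (\cite[Theorem 4.3]{Carey}) gives
\[
  \mathrm{Sf}(\{D_t^{(0)}\}_t) \;=\; \mathrm{Sf}(\{D_t^{(1)}\}_t),
\]
which is precisely the desired equality $\mathrm{Sf}(\{D_t\}_t) = \mathrm{Sf}(\{D'_t\}_t)$.

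The only subtle point, and hence the main obstacle, is making sure that the cited homotopy invariance statement from \cite{Carey} applies as stated to our two-parameter family; one must verify that their hypotheses are satisfied by the straight-line homotopy, in particular the continuity in the Riesz/norm topology used there. Since we are in the bounded setting and the homotopy is affine with a compact correction, this reduces to norm continuity in both variables, which follows directly from the assumed continuity of $\{D_t\}_t$ and $\{D'_t\}_t$. No unbounded-operator technicalities appear, so no use of the bounded transform identity \eqref{eq_bdd_transf} is needed for this lemma.
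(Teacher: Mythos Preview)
Your proof is correct and follows essentially the same approach as the paper: the paper's one-line argument simply notes that Fredholmness is preserved under compact perturbations, so the linear (straight-line) homotopy gives a continuous deformation with fixed invertible endpoints, and then invokes \cite[Theorem 4.3]{Carey}. Your version spells out the real/skew-adjoint/Fredholm/continuity/endpoint checks that the paper leaves implicit, but the idea is identical.
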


\begin{proof}
Since the Fredholmness is preserved by adding compact operators, we get a continuous deformation between two paths $\{D_t\}_t$ and $\{D'_t\}_t$ by the linear homotopy. 
\end{proof}

Thus, if we are given two invertible real skew-adjoint Fredholm operators $D_{-1}$ and $D_{1}$ which differ by a compact operator, we get a distinguished value of spectral flows between them; namely those of paths consisting of compact perturbations between them. 
\begin{definition}\label{def_sf_cpt}
Let $D_{-1}$ and $D_1$ be two invertible real skew-adjoint bounded Fredholm operators on $\mathcal{H}_\R$. 
Assume that $(D_1 - D_{-1})$ is a compact operator. 
Take any path $\{D_t\}_{t \in [-1, 1]}$ of real skew-adjoint Fredholm operators connecting $D_{-1}$ and $D_1$, such that $(D_t - D_{-1})$ is a compact operator for all $t \in [-1, 1]$. 
Then we define
\begin{align*}
    \mathrm{Sf}_{\rm cpt}(D_{-1}, D_1) := \mathrm{Sf}(\{D_t\}_{t \in [-1, 1]}). 
\end{align*}
This does not depend on the choice of the path by Lemma \ref{lem_cpt_perturbation}. 
\end{definition}

For $\mathrm{Sf}_{\rm cpt}$, we have a similar formula as \eqref{eq_fin_dim_sf}, which expresses the spectral flow by ``${\rm sgn}\det$'' of operators defined in Definition \ref{def_signdet}. 

\begin{proposition}\label{prop_sf=signdet}
Let $D_{-1}$ and $D_1$ be two invertible real skew-adjoint bounded operators on $\mathcal{H}_\R$. 
Assume that there exists an element $A \in \mathcal{C}(\mathcal{H}_\R)$ such that
\begin{align*}
    D_{1} =  A^\dagger D_{-1} A. 
\end{align*}
In particular, this means that $D_1 - D_{-1}$ is compact. 
Then, we have
\begin{align*}
    \mathrm{Sf}_{\rm cpt}(D_{-1}, D_1) := \frac{1 - {\rm sgn}\det(A)}{2},
\end{align*}
where ${\rm sgn}\det(A)$ is defined in Definition \ref{def_signdet}. 
\end{proposition}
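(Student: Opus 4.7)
The plan is to reduce the claimed identity to its finite-dimensional version \eqref{eq_fin_dim_sf} by showing that both sides are invariants of the component $[A] \in \pi_0(\mathcal{C}(\mathcal{H}_\R)) = \Z_2$, and then verifying the formula on one standard representative of each component. First, I would check that $\mathrm{Sf}_{\rm cpt}(D_{-1}, A^\dagger D_{-1} A)$ depends only on $[A]$ when $D_{-1}$ is kept fixed. Given $A, A' \in \mathcal{C}(\mathcal{H}_\R)$ in the same component, the product $A' A^{-1}$ lies in the identity component, so I can pick a continuous path $B_t$ from $I$ to $A' A^{-1}$ inside $\mathcal{C}(\mathcal{H}_\R)$; setting $C_t := B_t A$ gives a path from $A$ to $A'$ inside $\mathcal{C}(\mathcal{H}_\R)$, and the conjugation family $D_t := C_t^\dagger D_{-1} C_t$ is a continuous path of invertible real skew-adjoint compact perturbations of $D_{-1}$. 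Since every $D_t$ is invertible, the spectral flow along this subpath vanishes, and concatenation additivity of $\mathrm{Sf}$ (immediate from Definition \ref{def_sf}) forces $\mathrm{Sf}_{\rm cpt}(D_{-1}, A^\dagger D_{-1} A) = \mathrm{Sf}_{\rm cpt}(D_{-1}, (A')^\dagger D_{-1} A')$.

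Next, I would prove homotopy invariance in $D_{-1}$. For a continuous path $\{D_{-1}(s)\}_{s \in [0,1]}$ of invertible real skew-adjoint bounded operators with $A$ fixed, form the square $D_t^{(s)} := (1-t) D_{-1}(s) + t A^\dagger D_{-1}(s) A$ on $[0,1]^2$; these are real skew-adjoint Fredholm, and the two horizontal edges $t \in \{0, 1\}$ stay invertible, so they carry zero spectral flow. Loop-invariance of $\mathrm{Sf}$ on the null-homotopic boundary of the square then equates the spectral flow along the two vertical edges, i.e., $\mathrm{Sf}_{\rm cpt}(D_{-1}(0), A^\dagger D_{-1}(0) A) = \mathrm{Sf}_{\rm cpt}(D_{-1}(1), A^\dagger D_{-1}(1) A)$. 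Combined with path-connectedness of the space of invertible real skew-adjoint bounded operators (via the polar-type decomposition $D = |D| J$ with $J = D|D|^{-1}$ an orthogonal complex structure, deforming $|D| \to I$, and Kuiper-type connectedness of the space of complex structures on a separable real Hilbert space), this lets me replace the given $D_{-1}$ by any standard model.

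With both reductions in hand, I would verify the formula on an explicit model. Fix a 2-plane $V = \mathrm{span}(e_1, e_2) \subset \mathcal{H}_\R$ and choose $D_{-1}^* := J_V \oplus D'$ with $J_V e_1 = e_2$, $J_V e_2 = -e_1$, and $D'$ any invertible real skew-adjoint on $V^\perp$. For $\mathrm{sgn}\det A = +1$, take $A^* = I$, and both sides vanish. For $\mathrm{sgn}\det A = -1$, take $A^* := R \oplus I_{V^\perp}$ with $R = \mathrm{diag}(1, -1)$, so that $(A^*)^\dagger D_{-1}^* A^* = -J_V \oplus D'$; direct-sum additivity then reduces the spectral flow to the linear path $(1 - 2t) J_V$ on $V$, whose eigenvalues $\pm i(1 - 2t)$ cross zero once at $t = 1/2$, giving $\mathrm{Sf}(J_V, -J_V) = 1$ by \eqref{eq_fin_dim_sf}, which matches $(1 - \mathrm{sgn}\det R)/2 = 1$. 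The hardest part will be the analytic groundwork underlying the second reduction: formalizing loop invariance of $\mathrm{Sf}_{\rm cpt}$ for two-parameter Fredholm families in the unbounded Riesz topology, and exhibiting an explicit continuous path of invertibles from an arbitrary $D_{-1}$ to the model $D_{-1}^*$. Once those analytic facts are settled, the whole identity collapses onto the finite-dimensional definition \eqref{eq_fin_dim_sf}.
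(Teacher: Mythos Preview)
Your argument is correct but takes a genuinely different route from the paper. The paper keeps $D_{-1}$ fixed throughout and only deforms $A$: it picks a finite-dimensional $D_{-1}$-invariant spectral subspace $V = Q_\lambda(-1)\mathcal{H}_\R$, connects $A$ inside $\mathcal{C}(\mathcal{H}_\R)$ to some $A_2 = A_2|_V \oplus \mathrm{id}_{V^\perp}$, and then observes that the linear path from $D_{-1}$ to $A_2^\dagger D_{-1} A_2$ splits as a constant invertible piece on $V^\perp$ plus a finite-dimensional path on $V$, so that \eqref{eq_fin_dim_sf} applies directly. Your second reduction (deforming $D_{-1}$ itself) is therefore not needed in the paper's argument. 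Your route instead normalizes both $A$ and $D_{-1}$ to a single explicit model; this costs extra groundwork (connectedness of invertible real skew-adjoint bounded operators via polar decomposition and Kuiper, plus loop invariance of $\mathrm{Sf}$ over a two-parameter square), but the payoff is that the endpoint computation is a concrete $2\times 2$ calculation requiring no information about the spectral structure of the original $D_{-1}$. One small correction: the proposition concerns \emph{bounded} operators, so the Riesz topology is irrelevant here; the homotopy invariance you need is just the standard norm-topology version from \cite{Carey}, making the ``hardest part'' you flag lighter than you suggest.
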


\begin{proof}
Choose $\lambda > 0$ so that the spectrum of the spectrum of $\sqrt{-1}D_{-1}$ is discrete on the interval $[-\lambda, \lambda]$. 
The Hilbert space $\mathcal{H}_\R$ is decomposed as
\[\mathcal{H}_\R = Q_\lambda(-1) \mathcal{H}_\R \oplus (1 - Q_\lambda(-1)) \mathcal{H}_\R \]
with the first component finite dimensional. 
Choose a continuous path $\{A_t\}_{t \in [1, 2]}$ in $\mathcal{C}(\mathcal{H}_\R)$ such that $A = A_1$ and $A_2$ is of the form
\[
A_2 = A_2|_{Q_\lambda(-1) \mathcal{H}_\R} \oplus \mathrm{id}_{(1 - Q_\lambda(-1)) \mathcal{H}_\R}. 
\]
This implies
\begin{align}\label{eq_proof_signdet}
    {\rm sgn}\det(A) = {\rm sgn}\det(A_2|_{Q_\lambda(-1) \mathcal{H}_\R}). 
\end{align}
Consider a path $\{D_t\}_{t \in [-1, 2]}$ defined as 
\begin{align*}
    D_t := \begin{cases}
    \frac{1 - t}{2}D_{-1} + \frac{t + 1}{2}D_1 & \mbox{if } t \in [-1, 1],  \\
    A^\dagger_t D_{-1} A_t & \mbox{if } t \in [1, 2]. 
    \end{cases}
\end{align*}
Then we have
\begin{align*}
    \mathrm{Sf}_{\rm cpt}(D_{-1}, D_2) = \mathrm{Sf}(\{D_t\}_{t \in [-1, 1]}) + \mathrm{Sf}(\{D_t\}_{t \in [1, 2]})
    = \mathrm{Sf}_{\rm cpt}(D_{-1}, D_1),  
\end{align*}
where the second equality follows from the invertiblity of the family $\{D_t\}_{t\in [1, 2]}$. 
Note that $\mathrm{Sf}_{\rm cpt}(D_{-1}, D_2)$ is equal to the spectral flow of the linear path between $D_{-1}$ and $D_2$. 
Applying Definition \ref{def_sf} to this linear path, we see that
\begin{align*}
    \mathrm{Sf}_{\rm cpt}(D_{-1}, D_2) = \frac{1 - {\rm sgn}\det(A_2|_{Q_\lambda(-1) \mathcal{H}_\R})}{2}.
\end{align*}
Combining these with \eqref{eq_proof_signdet}, we get the result. 
\end{proof}

\subsubsection{The case of paths consisting of elliptic pseudodifferential operators}
In this subsubsection, we deal with the paths $\{D_t\}_t$ consisting of first order elliptic pseudodifferential operators on closed manifolds. 
Let us assume that $\mathcal{H}_\R = L^2(Y; S)$, where $Y$ is a closed manifold and $S$ is an $\R$-vector bundle with inner product over $Y$. 
Using the relation \eqref{eq_bdd_transf}, we have the corresponding notion of $\mathrm{Sf}_{\rm cpt}$ in this setting. 

\begin{definition}\label{def_sf_cpt_elliptic}
Let $Y$ and $S$ as above. 
Let $D_{-1}$ and $D_1$ be two invertible real skew-adjoint first order elliptic pseudodifferential operators on $L^2(Y; S)$. 
Assume that $D_1 - D_{-1}$ is of zeroth order. 
Take any path $\{D_t\}_{t \in [-1, 1]}$ of real skew-adjoint elliptic operators connecting $D_{-1}$ and $D_1$, such that $D_t - D_{-1}$ is of zeroth order for all $t \in [-1, 1]$. 
Then we define
\begin{align*}
    \mathrm{Sf}_{\rm cpt}(D_{-1}, D_1) := \mathrm{Sf}(\{D_t\}_{t \in [-1, 1]}). 
\end{align*}
This does not depend on the choice of the path by Lemma \ref{lem_cpt_perturbation} and \eqref{eq_bdd_transf}. 
\end{definition}

We see that $\mathrm{Sf}_{\rm cpt}$ is also compatible with the bounded transform, 
\begin{align}\label{eq_sf_cpt_bdd_transf}
    \mathrm{Sf}_{\rm cpt}(D_{-1}, D_1) = \mathrm{Sf}_{\rm cpt}(D_{-1}(1 + D^\dagger_{-1}D_{-1})^{-1/2}, D_{1}(1 + D^\dagger_{1}D_{1})^{-1/2}), 
\end{align}
where the left hand side is defined in Definition \ref{def_sf_cpt_elliptic} and the right hand side is defined in Definition \ref{def_sf_cpt}. 

\subsubsection{A relation between mod-two APS indices on cylinders and mod-two spectral flows}

In this subsection, we assume that $\mathcal{H}_\R$ is $\Z_2$-graded. 
Let $\gamma \in O(\mathcal{H}_\R)$ denote the $\Z_2$-grading operator. 
We deal with both cases where a family $\{D_t\}_{t}$ is bounded and unbounded. 
We explain a relation between mod two spectral flows of {\it odd} real skew-adjoint Fredholm operators and mod-two indices of certain operators on $\R$.  

\begin{proposition}\label{prop_APS=sf}
Let $\mathcal{H}_\R$ be $\Z_2$-graded with the grading operator $\gamma$. 
Let $\{D_t\}_{t \in [a, b]}$ be a continuous path of odd (i.e., $\gamma D_t + D_t \gamma = 0$ for all $t$) real skew-adjoint possibly unbounded Fredholm operators on $\mathcal{H}_\R$. 
We assume that $D_a$ and $D_b$ are both invertible. 

We construct a real skew-adjoint operator $\hat{D}$ on $L^2(\R_t) \otimes \mathcal{H}_\R$ as follows. 
By a continuous homotopy which fixes the endpoints, we perturb the path $\{D_t\}_{t \in [a, b]}$ into a smooth path $\{D^{\rm sm}_t\}_{t \in [a, b]}$ which is constant near the endpoints. 
We extend the path to $\{D^{\rm sm}_{t}\}_{t \in \R}$ by letting $D^{\rm sm}_t = D_a$ for $t < a$ and $D^{\rm sm}_t = D_b$ for $t > b$. 
We define $\hat{D}$ as
\begin{align*}
    \hat{D} := \gamma \del_t + D^{\rm sm}_t . 
\end{align*}
Then $\hat{D}$ is Fredholm and its mod-two index does not depend on the choice of the smoothing $\{D^{\rm sm}_t\}_{t \in [a, b]}$.  
We have, 
\begin{align}\label{eq_prop_APS=sf}
    \mathrm{Ind}(\hat{D}) = \mathrm{Sf}(\{D_t\}_{t \in [a,b]}) \in \Z_2. 
\end{align}
\end{proposition}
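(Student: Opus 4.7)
The plan is to establish both sides of \eqref{eq_prop_APS=sf} as deformation invariants that are additive under path concatenation, and then to reduce the equality to a local finite-dimensional computation that can be compared directly with \eqref{eq_fin_dim_sf}. First I would verify that $\hat{D}$ is Fredholm and that $\mathrm{Ind}(\hat{D})$ is independent of the chosen smoothing. Fredholmness follows from a standard cylindrical-end parametrix argument: outside $[a,b]$ the operator has the constant-coefficient form $\gamma \partial_t + D_a$ or $\gamma \partial_t + D_b$, both of which are invertible on $L^2(\R)\otimes \mathcal{H}_\R$ since $D_a, D_b$ are invertible and anti-commute with $\gamma$; patching these with an interior parametrix yields Fredholmness of $\hat{D}$. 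Independence from the smoothing is immediate, since any two smoothings with common endpoints are connected by a continuous homotopy of real skew-adjoint Fredholm operators with fixed invertible ends, along which the mod-two index is locally constant.

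Next I would prove additivity of both sides under cutting the interval at an interior point $c\in(a,b)$ at which $D_c$ is invertible. For $\mathrm{Sf}$ this is built into Definition \ref{def_sf} by refining the subdivision to include $c$. For $\mathrm{Ind}(\hat{D})$ I would apply a stretching homotopy: deform $D^{\rm sm}_t$ so that it is constantly equal to $D_c$ on a long interval $[c-L, c+L]$. On this long cylinder $\gamma \partial_t + D_c$ is invertible, so a standard gluing argument for mod-two indices on manifolds with cylindrical ends expresses $\mathrm{Ind}(\hat{D})$ as the sum of the mod-two indices of the two half-line suspension operators associated with the two subpaths. Combining additivity with the subdivision $a=t_0<t_1<\cdots<t_N=b$ from Definition \ref{def_sf}, the problem reduces to proving \eqref{eq_prop_APS=sf} on a single subinterval $[t_{n-1},t_n]$ on which $Q_{\lambda_n}(t)$ has constant finite rank.

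On such a small interval I would split the Hilbert space as $\mathcal{H}_\R = Q_{\lambda_n}(t)\mathcal{H}_\R \oplus (1-Q_{\lambda_n}(t))\mathcal{H}_\R$ and decompose $\hat{D}$ accordingly. The ``high'' block is a suspension whose fiberwise operator has spectrum bounded away from zero uniformly in $t$, and is therefore invertible on the corresponding $L^2$ space, contributing $0$ to the mod-two index. The ``low'' block is essentially a suspension of a finite-rank smooth family $D^{(\lambda_n)}_t$. The $t$-dependence of the projection generates off-diagonal commutator terms involving $[\partial_t, Q_{\lambda_n}(t)]$, but by refining the subdivision these terms become a small bounded (in fact compact after the bounded transform via \eqref{eq_bdd_transf}) perturbation that does not change the mod-two index. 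The kernel of the remaining finite-dimensional suspension $\gamma \partial_t + D^{(\lambda_n)}_t$ is described by the normalizable solutions of the ODE $\partial_t\psi = -\gamma D^{(\lambda_n)}_t \psi$, and its mod-two index is $(1-\mathrm{sgn}\det A)/2$, where $A$ is the transport operator of this ODE. Identifying $A$ (after projecting onto $Q_{\lambda_n}(t_n)\mathcal{H}_\R$ via $V_n$) as the operator that intertwines $D^{(\lambda_n)}_{t_{n-1}}$ with $V_n^\dagger D^{(\lambda_n)}_{t_n} V_n$ exactly matches the finite-dimensional formula \eqref{eq_fin_dim_sf} for $\mathrm{Sf}$, completing the identification.

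The main obstacle I anticipate is controlling this splitting rigorously: the spectral projections $Q_{\lambda_n}(t)$ vary with $t$, so $\hat{D}$ does not split cleanly, and one must argue that $[\partial_t, Q_{\lambda_n}(t)]$ can be made arbitrarily small in the appropriate operator norm by shrinking the subintervals, so that the decomposition is exact up to a compact, index-neutral perturbation. A secondary but essential point is the sign bookkeeping in the finite-dimensional model, where the conventions for $\gamma$, for the orientation of the interval, and for the transport operator $A$ must all line up with \eqref{eq_fin_dim_sf}; this is where the subtlety of mod-two information, which is sensitive to the path and not only to its endpoints, really enters. Once these points are settled, the other ingredients (Fredholmness, additivity, homotopy invariance, and compatibility with the bounded transform \eqref{eq_bdd_transf}) are standard.
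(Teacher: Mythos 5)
Your overall architecture---Fredholmness via invertible cylindrical ends, independence of the smoothing by homotopy invariance, additivity of both sides under cutting the path at an invertible interior point, and reduction to a finite-dimensional computation on each subinterval of the subdivision from Definition \ref{def_sf}---is exactly the paper's route (its properties (a), (b), (c) and the appeal to \cite[Theorem 4.2, 4.3]{Carey}). The genuine gap is in your last step. The mod-two index of the finite-dimensional suspension $\gamma \del_t + D^{(\lambda_n)}_t$ is \emph{not} $\bigl(1-{\rm sgn}\det A\bigr)/2$ for $A$ the parallel transport of the ODE $\del_t\psi=-\gamma D^{(\lambda_n)}_t\psi$, and that transport is not an intertwiner in the sense of \eqref{eq_fin_dim_sf}: since $-\gamma D_t$ is symmetric rather than skew-adjoint, the transport $T$ need not satisfy $D_{t_n}=T^\dagger D_{t_{n-1}}T$. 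The paper's own model case refutes the identification: for $\mathcal{H}_\R=\R^2$, $\gamma=\mathrm{diag}(1,-1)$ and $D_t=t\left(\begin{smallmatrix}0&1\\-1&0\end{smallmatrix}\right)$ on $[-1,1]$, one has $\gamma D_t=t\left(\begin{smallmatrix}0&1\\1&0\end{smallmatrix}\right)$, so the transport is $T=\exp\bigl(-\int_{-1}^{1}t\,dt\cdot\left(\begin{smallmatrix}0&1\\1&0\end{smallmatrix}\right)\bigr)=\mathrm{id}$ with ${\rm sgn}\det T=+1$, which would give $0$; but any genuine intertwiner $A$ with $D_1=A^\dagger D_{-1}A$ has $\det A=-1$, the spectral flow is $1$, and the $L^2$-kernel of $\hat{D}$ is one-dimensional, so both sides of \eqref{eq_prop_APS=sf} equal $1$. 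The mismatch is precisely the mod-two information you flagged as ``sign bookkeeping''; it cannot be fixed by bookkeeping because the transport operator simply computes a different invariant.

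The correct way to close the finite-dimensional case---and what the paper does---is to use additivity and homotopy invariance once more \emph{inside} finite dimensions: after splitting off summands on which the path stays invertible (which contribute $0$ by property (a)), deform the remaining finite-dimensional path rel endpoints into a direct sum of copies of the standard model above, and for that single model compute the one-dimensional $L^2$-kernel of $\hat{D}$ explicitly and compare with $\mathrm{Sf}=1$. Your treatment of the high/low spectral splitting and of the commutator $[\del_t,Q_{\lambda_n}(t)]$ is sound and is implicit in the paper's reduction; only the final identification of the index with the determinant sign of the ODE transport must be replaced by this explicit model computation.
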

\begin{proof}
The independence of $\mathrm{Ind}(\hat{D})$ on the choice of smoothings follows from the deformation invariance of indices. 

We reduce the proof of \eqref{eq_prop_APS=sf} to finite-dimensional cases. 
In order to do this, we need the following easy properties of the indices of operators on cylinders.
\begin{enumerate}
    \item[(a)] Given a path $\{D_t\}_{t \in [a, b]}$ as above, if $D_t$ is invertible for all $t \in [a, b]$, we have $\mathrm{Ind}(\hat{D}) = 0$. 
    \item[(b)] Given a path $\{D_t\}_{t \in [a, b]}$ as above, assume that the path is divided into two paths as $\{D_t\}_{t \in [a, b]} = \{D'_t\}_{t \in [a, c]} \cup \{D''_t\}_{t \in [c, a]}$ with $D_c$ invertible. 
    We construct the operators $\hat{D'}$ and $\hat{D''}$ on $L^2(\R)\otimes \mathcal{H}_\R$ using $\{D'_t\}_t$ and $\{D''_t\}_t$ respectively in the same way. 
    Then we have
    \begin{align*}
        \mathrm{Ind}(\hat{D}) = \mathrm{Ind}(\hat{D'})+ \mathrm{Ind}(\hat{D''}) . 
    \end{align*}
    \item[(c)] Given two paths $\{D_t\}_{t \in [a, b]}$ and $\{D'_t\}_{t \in [a, b]}$ as above, assume that $D_a = D'_a$ and $D_b = D'_b$, and that the two paths are connected by a continuous homotopy leaving the endpoints fixed. 
    Then we have
    \begin{align*}
        \mathrm{Ind}(\hat{D}) = \mathrm{Ind}(\hat{D'}). 
    \end{align*}
\end{enumerate}
Indeed, (a) is because $\hat{D}$ is invertible in such cases, 
(b) follows from the gluing property of the indices, 
and (c) follows from the deformation invariance of the indices. 
Using the definition of mod-two spectral flows and the above properties of indices of operators on cylinders, as well as the corresponding properties of mod-two spectral flows (\cite[Theorem 4.2, 4.3]{Carey}), we can easily reduce to the case where $\mathcal{H}_\R$ is finite dimensional. 
Moreover, using the above properties again, we are reduced to the case where 
$\mathcal{H}_\R = \R^2$, 
\begin{align*}
    \gamma = \left(
    \begin{array}{cc}
      1 & 0 \\
      0 & -1
    \end{array}
  \right), \quad
  D_t = t\left(
    \begin{array}{cc}
      0 & 1 \\
      -1 & 0
    \end{array}
  \right) \quad t \in [-1, 1]. 
\end{align*}
In this case we have $\mathrm{Sf}(\{D_t\}_{t \in [-1, 1]}) = 1$. 
On the other hand, the $L^2$-kernel of $\hat{D}$ is one-dimensional, spanned by an element which is asymptotically $e^{-t}(1, -1)$ on $t >> 1$ and $e^{t}(1, -1)$ on $t << -1$, so we have $\mathrm{Ind}(\hat{D})=1$. 
Thus we get \eqref{eq_prop_APS=sf} and the result follows. 

\end{proof}

Now assume that we are given two invertible odd real skew-adjoint operators $D_{-1}$ and $D_1$ which differ by compact (resp. zero-th order) in the bounded case (resp. first-order elliptic case). 
In this case, we get a canonical choice of operator $A$ satisfying $D_{1} =  A^\dagger D_{-1} A$. 
Namely, with respect to the $\Z_2$-grading, we decompose $D_t$, $t = \pm 1$, as
\begin{align}\label{eq_odd_decomp}
    D_t = \left(
    \begin{array}{cc}
      0 & D_{+, t} \\
      -(D_{+, t})^\dagger & 0 
    \end{array}
  \right). 
\end{align}
Then we can choose $A$ to be, 
\begin{align*}
    A := \left(
    \begin{array}{cc}
      (D_{+, -1}^\dagger)^{-1}D_{+, 1}^\dagger & 0 \\
      0 & \mathrm{id} 
    \end{array}
  \right). 
\end{align*}
By the assumption on the difference between $D_1$ and $D_{-1}$, we see that $A \in \mathcal{C}(\mathcal{H}_\R)$. 
In the bounded case, by Proposition \ref{prop_sf=signdet} and the obvious identity ${\rm sgn} \det (A) = {\rm sgn} \det (A^\dagger) $, we get the following. 
\begin{proposition}\label{prop_sf=signdet_odd_bdd}
Let $\mathcal{H}_\R$ be $\Z_2$-graded with the grading operator $\gamma$. 
Assume we are given two invertible odd real skew-adjoint bounded operators $D_{-1}$ and $D_1$ with $D_1 - D_{-1}$ compact. 
Then we have
\begin{align*}
    \mathrm{Sf}_{\rm cpt}(D_{-1}, D_1) = \frac{1 - {\rm sgn}\det(D_{+, 1}(D_{+, -1})^{-1})}{2}. 
\end{align*}
Here $D_{+, t}$ is defined in \eqref{eq_odd_decomp}.  
\end{proposition}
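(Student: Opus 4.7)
The plan is to apply Proposition \ref{prop_sf=signdet} directly to the explicit operator $A$ already displayed in the paragraph immediately before the proposition. The proof reduces to three routine checks: membership $A \in \mathcal{C}(\mathcal{H}_\R)$, the conjugation identity $D_1 = A^\dagger D_{-1} A$, and the identification of ${\rm sgn}\det A$ with ${\rm sgn}\det(D_{+,1}(D_{+,-1})^{-1})$.

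First I would confirm that $A \in \mathcal{C}(\mathcal{H}_\R)$. Invertibility follows from the block-diagonal form of $A$ together with the invertibility of $D_{+,\pm 1}$, which is implied by the assumed invertibility of $D_{\pm 1}$ via \eqref{eq_odd_decomp}. For compactness of $A - \mathrm{id}$, the only nonzero block is $(D_{+,-1}^\dagger)^{-1}(D_{+,1}^\dagger - D_{+,-1}^\dagger)$; since $D_1 - D_{-1}$ is compact, its graded off-diagonal components $D_{+,1} - D_{+,-1}$ and their adjoints are compact, and multiplication on the left by the bounded operator $(D_{+,-1}^\dagger)^{-1}$ preserves compactness.

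Next I would verify $D_1 = A^\dagger D_{-1} A$ by a direct $2 \times 2$ block calculation using \eqref{eq_odd_decomp}. A short computation multiplying $D_{-1}$ on the right by $A$ and on the left by $A^\dagger$ yields the odd skew-adjoint matrix whose $(1,2)$-entry is $D_{+,1}$ and whose $(2,1)$-entry is $-D_{+,1}^\dagger$, i.e.\ exactly $D_1$. Proposition \ref{prop_sf=signdet} then delivers $\mathrm{Sf}_{\rm cpt}(D_{-1},D_1) = (1 - {\rm sgn}\det A)/2$.

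To close the argument, I would note that $A^\dagger$ is the block-diagonal operator whose nontrivial block is precisely $D_{+,1}(D_{+,-1})^{-1}$, so by the evident block-diagonality of both $A$ and $A^\dagger$ their ${\rm sgn}\det$ values coincide with the ${\rm sgn}\det$ of the nontrivial blocks. The identity ${\rm sgn}\det A = {\rm sgn}\det A^\dagger$ itself follows from Definition \ref{def_signdet}: the involution $B \mapsto B^\dagger$ is continuous on $\mathcal{C}(\mathcal{H}_\R)$ and fixes the identity, hence preserves connected components. Combining these gives ${\rm sgn}\det A = {\rm sgn}\det(D_{+,1}(D_{+,-1})^{-1})$, which yields the claimed formula. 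No serious obstacle appears; the whole argument is bookkeeping once the specific $A$ whose nontrivial block lives on the $\gamma = +1$ summand has been singled out.
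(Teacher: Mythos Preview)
Your proof is correct and follows exactly the paper's own approach: the paper constructs the same block-diagonal $A$, observes $A\in\mathcal{C}(\mathcal{H}_\R)$, invokes Proposition~\ref{prop_sf=signdet}, and then uses the identity ${\rm sgn}\det(A)={\rm sgn}\det(A^\dagger)$ to pass to $D_{+,1}(D_{+,-1})^{-1}$. You have simply supplied the routine verifications that the paper leaves implicit.
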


In the first-order elliptic case, we have the corresponding result. 
\begin{proposition}\label{prop_sf=signdet_odd_elliptic}
Let $Y$ be a closed manifold and $S$ be a $\Z_2$-graded real Euclidean vector bundle over $Y$. 
Assume we are given two invertible odd real skew-adjoint first-order elliptic operators $D_{-1}$ and $D_1$ on $L^2(Y; S)$. 
Suppose that $D_1 - D_{-1}$ is of zeroth order. 
Then we have
\begin{align*}
    \mathrm{Sf}_{\rm cpt}(D_{-1}, D_1) = \frac{1 - {\rm sgn}\det(D_{+, 1}(D_{+, -1})^{-1})}{2}. 
\end{align*}
Here $D_{+, t}$ is defined in \eqref{eq_odd_decomp}. 
\end{proposition}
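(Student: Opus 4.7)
The plan is to reduce to Proposition \ref{prop_sf=signdet_odd_bdd} via the bounded transform, and then bridge the remaining gap by a pseudodifferential homotopy. First, combining equation \eqref{eq_sf_cpt_bdd_transf} with Proposition \ref{prop_sf=signdet_odd_bdd} applied to the bounded operators $\tilde{D}_{\pm 1} := D_{\pm 1}(1 + D_{\pm 1}^\dagger D_{\pm 1})^{-1/2}$ yields
\[
\mathrm{Sf}_{\rm cpt}(D_{-1}, D_1) = \mathrm{Sf}_{\rm cpt}(\tilde{D}_{-1}, \tilde{D}_1) = \frac{1 - {\rm sgn}\det\bigl((\tilde{D}_1)_+ ((\tilde{D}_{-1})_+)^{-1}\bigr)}{2}.
\]
Thus it suffices to show
\[
{\rm sgn}\det\bigl((\tilde{D}_1)_+ ((\tilde{D}_{-1})_+)^{-1}\bigr) = {\rm sgn}\det\bigl(D_{+,1} D_{+,-1}^{-1}\bigr).
\]
Using the intertwining identity $D f(D^\dagger D) = f(D D^\dagger) D$ applied to $f(x) = (1+x)^{-1/2}$, we rewrite
\[
(\tilde{D}_1)_+ ((\tilde{D}_{-1})_+)^{-1} = (1 + D_{+,1} D_{+,1}^\dagger)^{-1/2}\, D_{+,1} D_{+,-1}^{-1}\, (1 + D_{+,-1} D_{+,-1}^\dagger)^{1/2}.
\]

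The key step is to connect this operator to $D_{+,1} D_{+,-1}^{-1}$ through a norm-continuous path in $\mathcal{C}(L^2(Y; S^+))$. I would introduce
\[
X_\lambda := (1 + \lambda D_{+,1} D_{+,1}^\dagger)^{-1/2}\, D_{+,1} D_{+,-1}^{-1}\, (1 + \lambda D_{+,-1} D_{+,-1}^\dagger)^{1/2}, \qquad \lambda \in [0, 1],
\]
so that $X_0 = D_{+,1} D_{+,-1}^{-1}$ and $X_1 = (\tilde{D}_1)_+ ((\tilde{D}_{-1})_+)^{-1}$. Because $D_1 - D_{-1}$ is of order zero, $D_{+,1}$ and $D_{+,-1}$ share the same principal symbol; a routine pseudodifferential computation then shows that each $X_\lambda$ is an invertible zeroth-order pseudodifferential operator with principal symbol $\mathrm{Id}$, so $X_\lambda - \mathrm{Id}$ is of order $-1$ and hence compact on the closed manifold $Y$, placing $X_\lambda \in \mathcal{C}(L^2(Y; S^+))$.

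The main obstacle is verifying norm continuity of $\lambda \mapsto X_\lambda$ at $\lambda = 0$. On $(0,1]$ this is immediate from standard pseudodifferential calculus with parameter, but at $\lambda = 0$ the individual factors $(1 + \lambda D^\dagger D)^{\pm 1/2}$ do not converge to the identity in operator norm, so we must exploit cancellation at the symbol level. Using intertwining once more, we rewrite $X_\lambda = D_{+,1}\, F_\lambda\, D_{+,-1}^{-1}$ with $F_\lambda := (1 + \lambda D_{+,1}^\dagger D_{+,1})^{-1/2}(1 + \lambda D_{+,-1}^\dagger D_{+,-1})^{1/2}$. Then $F_\lambda - \mathrm{Id}$ is an order $-1$ pseudodifferential operator whose principal $(-1)$-symbol vanishes uniformly on the cosphere bundle as $\lambda \to 0$: the driving term involves the symbol of $D_{+,1}^\dagger D_{+,1} - D_{+,-1}^\dagger D_{+,-1}$, which is of order $1$ (one order below that of $D^\dagger D$), and a careful frequency-by-frequency estimate together with the $L^2$-boundedness of zeroth-order pseudodifferential operators gives $\|D_{+,1}(F_\lambda - \mathrm{Id}) D_{+,-1}^{-1}\| \to 0$. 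This furnishes the required continuous path, so ${\rm sgn}\det(X_0) = {\rm sgn}\det(X_1)$, completing the proof.
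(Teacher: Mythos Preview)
Your reduction to the bounded case via \eqref{eq_sf_cpt_bdd_transf} and Proposition~\ref{prop_sf=signdet_odd_bdd} matches the paper exactly, as does the identification of the remaining task: show that ${\rm sgn}\det$ of $\chi(D_{+,1})\chi(D_{+,-1})^{-1}$ and of $D_{+,1}D_{+,-1}^{-1}$ agree. The divergence is in how you close this gap, and there your argument has a genuine flaw.

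Your assertion that the principal $(-1)$-symbol of $F_\lambda - \mathrm{Id}$ vanishes as $\lambda \to 0$ is incorrect. Set $A = D_{+,1}^\dagger D_{+,1}$ and $B = D_{+,-1}^\dagger D_{+,-1}$; they share the order-$2$ principal symbol $p_2$, while $A-B$ has a nontrivial order-$1$ symbol $q_1$. For each fixed $\lambda>0$ one has $(1+\lambda A)^{-1/2}(1+\lambda B)^{1/2} = A^{-1/2}B^{1/2}$ modulo operators of order $\le -2$ (the $\lambda$-dependent corrections enter only at order $-2$ and below), so
\[
\sigma_{-1}(F_\lambda - \mathrm{Id}) \;=\; \sigma_{-1}(A^{-1/2}B^{1/2} - \mathrm{Id}) \;=\; -\frac{q_1}{2p_2} + (\text{terms from the composition formula}),
\]
which is \emph{independent of $\lambda$}. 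The same holds for $X_\lambda - X_0$. Hence the vanishing-symbol argument fails, and with it your route to norm continuity at $\lambda=0$. (The conclusion $\|X_\lambda - X_0\|\to 0$ may nonetheless be true --- in commuting models one gets rate $\lambda^{1/2}$, governed by the crossover scale $|\xi|\sim\lambda^{-1/2}$ --- but this requires a genuine parameter-dependent estimate, not the ordinary principal-symbol calculus you invoke.)

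The paper bypasses all of this with a one-line observation. The operator
\[
Q := (1 + D_{+,1}^\dagger D_{+,1})^{1/2}(1 + D_{+,-1}^\dagger D_{+,-1})^{-1/2}
\]
lies in $\mathcal{C}(\mathcal{H}_\R)$ (it is order $0$ with principal symbol the identity) and has strictly positive spectrum: if $Qv=\mu v$, then with $w := (1+D_{+,-1}^\dagger D_{+,-1})^{-1/2}v$ one obtains $\langle(1+D_{+,1}^\dagger D_{+,1})^{1/2}w,w\rangle = \mu\langle(1+D_{+,-1}^\dagger D_{+,-1})^{1/2}w,w\rangle$, forcing $\mu>0$. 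Thus the straight-line path $s\mapsto(1-s)\mathrm{Id}+sQ$ stays in $\mathcal{C}(\mathcal{H}_\R)$, so $Q$ is in the identity component. Since $\chi(D_{+,1})\chi(D_{+,-1})^{-1}$ and $D_{+,1}D_{+,-1}^{-1}$ differ by multiplication by a conjugate of $Q^{-1}$ (which inherits positive spectrum), multiplicativity of ${\rm sgn}\det$ finishes the proof without any delicate limit.
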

\begin{proof}
For an unbounded Fredholm operator, let us denote by $\chi(D) := D(1 + D^\dagger D)^{-1/2}$ its bounded transform.
Note that if $D$ is odd and skew-adjoint, $\chi(D)$ is also odd and skew-adjoint. 
By \eqref{eq_sf_cpt_bdd_transf} and Proposition \ref{prop_sf=signdet_odd_bdd}, we get
\begin{align*}
    \mathrm{Sf}_{\rm cpt}(D_{-1}, D_1) =\frac{1 - {\rm sgn}\det(\chi(D_{+, 1})(\chi(D_{+, -1}))^{-1})}{2}. 
\end{align*}
Since the operator $(1 + D_{+, 1}^\dagger D_{+, 1})^{1/2}(1 + D_{+, -1}^\dagger D_{+, -1})^{-1/2}$ lies in the same connected component of $\mathcal{C}(\mathcal{H}_\R)$ as the identity, we see that
\begin{align*}
{\rm sgn}\det(D_{+, 1}(D_{+, -1})^{-1}) = 
    {\rm sgn}\det(\chi(D_{+, 1})(\chi(D_{+, -1}))^{-1}). 
\end{align*}
Thus, we get the result. 
\end{proof}

\subsection{Proof of main theorem}\label{subsec_proof}
In this subsection, we prove Theorem \ref{theorem: main theorem}. 
The proof given here relies on the techniques developed in our previous work \cite{Fukaya:2019qlf}. 
We will see that, by modifying the proof in that paper appropriately, essentially the same proof works in the mod-two case. 
We give an alternative simpler and self-contained proof in Appendix. 

First, in order to deal with smooth operators in the proof, we perturb the $L^\infty$-function $\kappa \colon Y \to [-1, 1]$ to a smooth function $\kappa^{\rm sm} \colon Y \to [-1, 1]$ so that $\kappa^{\rm sm} \equiv \pm1$ on $Y_{\pm} \setminus (-4, 4) \times X$ (recall the collar parameter introduced before the statement of Theorem \ref{theorem: main theorem}). 
Consider the corresponding smoothed
domain-wall Dirac operator, 
\begin{equation*}
 D^{\rm sm}_{\rm DW}=  D + \kappa^{\rm sm} m {\rm id}_S.
\end{equation*}
For $m$ large enough, we have
\begin{align*}
    {\rm sgn}\det(D_{\rm DW}D_{\rm PV}^{-1}) = {\rm sgn}\det(D^{\rm sm}_{\rm DW}D_{\rm PV}^{-1}). 
\end{align*}
This is because, for $m$ large enough, the linear path connecting $D_{\rm DW}$ and $D_{\rm DW}^{\rm sm}$ consists of invertible operators. 

Let us consider a $\Z_2$-graded vector bundle $S \oplus S$ over $Y$ with the natural real structure, with the $\Z_2$-grading given by $\gamma = \mathrm{diag}(\mathrm{id}_{S},- \mathrm{id}_S)$. 
Choose any smooth function $\hat{\kappa}^{\rm sm} \colon \R \times Y \to [-1, 1]$
such that $\hat{\kappa}^{\rm sm}_t := \hat{\kappa}^{\rm sm}(t, \cdot) = +1$ for $t < -0.5$ and $\hat{\kappa}^{\rm sm}_t = \kappa^{\rm sm}$ for $t > 0.5$. 
Let $D_t\colon L^2(Y;S\oplus S)\to L^2 (Y;S\oplus S)$
be a one-parameter family of odd real skew-adjoint elliptic operators defined by
\begin{eqnarray}\label{eq_Dt}
  D_t := \left(
  \begin{array}{cc}
    0 & D+m\hat{\kappa}^{\rm sm}_t  {\rm id}_S\\
    D-m\hat{\kappa}^{\rm sm}_t  {\rm id}_S & 0
\end{array}
  \right). 
\end{eqnarray}  
Note that
\begin{eqnarray}
 D_{-1} = \left(
  \begin{array}{cc}
    0 & D_{\rm PV}\\
    -(D_{\rm PV})^\dagger & 0
\end{array}
  \right),&
  D_{1} = \left(
  \begin{array}{cc}
    0 & D^{\rm sm}_{\rm DW}\\
    -(D^{\rm sm}_{\rm DW})^\dagger & 0
\end{array}
  \right),& 
\end{eqnarray}
and these two operators are both invertible.
The path $\{D_t\}_{t \in [-1, 1]}$ satisfies the condition in Definition \ref{def_sf_cpt_elliptic}; namely, $D_t - D_{-1}$ is of zeroth-order for all $t \in [-1, 1]$. 
By Proposition \ref{prop_sf=signdet_odd_elliptic}, we get
\begin{align}\label{eq_prf_main_sf}
    \mathrm{Sf}(\{D_t\}_{t \in [-1, 1]}) = \mathrm{Sf}_{\rm cpt}(D_{-1}, D_1) 
    = \frac{1-{\rm sgn}\det(D^{\rm sm}_{\rm DW}D_{\rm PV}^{-1})}{2}. 
\end{align}
Applying Proposition \ref{prop_APS=sf}, we get the following. 
We introduce a real skew-adjoint operator $\hat{D}_m$ on $C^\infty (\mathbb{R}\times Y; S\oplus S)$
defined by
\begin{eqnarray}
\hat{D}_m :=\gamma  \partial_t +D_t=\left(
  \begin{array}{cc}
    \partial_t & D+m\hat{\kappa}^{\rm sm}_t  {\rm id}_S\\
    D-m\hat{\kappa}^{\rm sm}_t  {\rm id}_S & -\partial_t
\end{array}
  \right).
\end{eqnarray}  
Then we have
\begin{align}\label{eq_prf_main_ind=sf}
    \mathrm{Ind}(\hat{D}_m) = \mathrm{Sf}(\{D_t\}_{t \in [-1, 1]}) . 
\end{align}
By \eqref{eq_prf_main_sf} and \eqref{eq_prf_main_ind=sf}, we are left to prove the following. 
\begin{align}\label{eq_prf_main_goal}
    \mathrm{Ind}_{\rm APS}(D|_{Y_-}) = \mathrm{Ind}(\hat{D}_m). 
\end{align}
Now the proof is just a small modification of that of the main theorem of our previous work \cite{Fukaya:2019qlf}, so we summarize the main points here and refer the details to it. 
The strategy is to embed $Y_{\rm cyl}:= Y_- \cup [0,+\infty)\times X$ into $\R \times Y$ in a certain way, and use localization argument to prove \eqref{eq_prf_main_goal}.

First, since $D_X$ is assumed to be invertible, we can apply Proposition \ref{prop_APS=cyl} and get 
 \begin{align}\label{eq_APS=cyl}
      \mathrm{Ind}_{\rm APS}(D|_{Y_-})=\mathrm{Ind} (D_{\rm cyl}) . 
  \end{align}
Here  
$S$ and $D$ are extended to the cylinder to $S_{\rm cyl}$ and $D_{\rm cyl}$, respectively, in a canonical way. 

Moreover, let us consider a bundle $S_{\rm cyl}\oplus S_{\rm cyl}$ on
$\mathbb{R}_s\times Y_{\rm cyl}$
and introduce a higher-dimensional Dirac operator
\begin{eqnarray}
\hat{D}_{\rm cyl} = \left(
  \begin{array}{cc}
    \partial_s & D_{\rm cyl}+m\;{\rm sgn}\; {\rm id}_S,\\
    D_{\rm cyl}+m \;{\rm sgn}\; {\rm id}_S & -\partial_s
\end{array}
  \right),
\end{eqnarray}  
where ${\rm sgn}:\mathbb{R}\times Y_{\rm cyl}\to [-1,1]$ is the $L^\infty$-function\footnote{
Here we the operator is not smooth, but it causes no problem. 
We may also use a smoothing of the function sgn if we like. }
such that ${\rm sgn}=-1$ on $(-\infty,0)\times Y_{\rm cyl}$
and ${\rm sgn}=1$ on $(0,\infty)\times Y_{\rm cyl}$.
In the same way as \cite[Section 3.3]{Fukaya:2019qlf}
we have $\dim {\rm Ker} \hat{D}_{\rm cyl}=\dim {\rm Ker} D_{\rm cyl}$.

Now we recall the construction of a smooth embedding in \cite[Section 3.4]{Fukaya:2019qlf}. 
We define an embedding
\[
\bar{\tau}\colon (-2,2)\times Y_{\rm cyl} \to  \mathbb{R}\times Y. 
\]
Roughly speaking, the cylinder $\{0\} \times Y_{\rm cyl} \subset (-2, 2) \times Y_{\rm cyl}$ goes to a smoothing of the subset 
$\{0\} \times Y_{-} \cup [0, \infty) \times X \subset \R \times Y$. 

Let $R_1:=(-2,2)\times(-4,\infty)$ and $R_2=\mathbb{R}\times(-4,4)$.
We denote the coordinate of $R_1$ by $(-\tau,t)$, and that of $R_2$ by $(s,u)$.
Fix an embedding $\tau_{\mathbb{R}^2}:R_1\to R_2$ such that $\tau_{\mathbb{R}^2}\equiv {\rm id}$
for $t<-2$ and
\[
(-\tau,t)\mapsto (t,\tau)
\]
for $t>100$. Since $X$ has a collar neighbourhood isometric to $(-4,4)\times X$,
we can regard $R_1\times X$ and $R_2\times Y$ as open subsets of
$(-2,2)\times Y_{\rm cyl}$ and $\mathbb{R}\times Y$, respectively.
Using this, we can define an embedding $\bar{\tau}$
so that $\bar{\tau}\equiv {\rm id}_\mathbb{R}\times {\rm id}_Y$ on $(-2,2)\times Y_-$
and $\bar{\tau}\equiv \tau_{\mathbb{R}^2}\times {\rm id}_X$ on $R_1\times X$.
Note that $\bar{\tau}$ is an isometry outside a compact set $((-2,2)\times (-2,100)\times X)$.

Here, the important point for the localization argument is the following. 
We view $\bar{\tau}(\{0\}\times Y_{\rm cyl})$ as a domain-wall in $\R \times Y$, which separates $\R \times Y$ into two connected components. 
Then, the smooth function $\hat{\kappa}^{\rm sm} \colon \R \times Y \to [-1, 1]$ is a smoothing of the domain-wall function which takes value $\pm 1$ on the two connected components respectively. 

In our previous work, we have shown that there is a one-parameter family of
Riemannian metric connecting the induced metric by $\bar{\tau}$ and
the original one on $\mathbb{R}\times Y$ so that the low lying spectrum is unchanged
for $m>m_0$ with some real positive number $m_0$.
Thus for such $m$ we have
\begin{align}\label{eq_cyl=hatD}
    \dim {\rm Ker} \hat{D}_{\rm cyl}=\dim {\rm Ker} \hat{D}_m. 
\end{align}

By \eqref{eq_APS=cyl} and \eqref{eq_cyl=hatD}, we get \eqref{eq_prf_main_goal} and the result follows.


\section{Anomaly inflow and bulk-edge correspondence in the mod-two APS index}
\label{sec:inflow}

In the previous section, we have proved that for any mod-two
APS index of a Dirac operator on a manifold $Y_-$ with boundary $X$,
there exists a domain-wall Dirac fermion determinant
\begin{eqnarray}
\det(D_{\rm DW}D_{\rm PV}^{-1}) &=& \det \left(\frac{D+\kappa m {\rm id}_S}{D+m{\rm id}_S}\right),
\end{eqnarray}
and the quantity $(1-{\rm sgn}\det(D_{\rm DW}D_{\rm PV}^{-1})/2$ coincides with the original index (mod 2).
In the latter setup, instead of the boundary $X$,
we put the ``outside'' $Y_+$ to form a closed manifold $Y$,
and the mass term is introduced in such a way that
the sign flips at the original location of $X$.

Contrary to the original APS's massless Dirac operator,
which requires a non-local and unphysical boundary condition,
the operator $D$ in the domain-wall fermion determinant
is kept anti-Hermitian (skew-adjoint) without any difficulty.
The local and rotational symmetric boundary condition,
which is commonly expected in the fermion system of topological insulators,
is automatically satisfied on the domain-wall.
In this section, we discuss another physicist-friendly aspect
of the domain-wall fermion formulation: it allows a natural
decomposition of the index into bulk and edge contributions.

Let us introduce a free fermion field, or a trivial bundle $S_0$
on $Y$, where we assume by an appropriate regularization,
that $(S_0)_y$ the fiber at $y\in Y$, is isomorphic\footnote{
  For example, $(S_0)_y$ is isomorphic to $(S)_y$ at each site $y$ in
  the lattice regularization.} to $(S)_y$ (but we do not assume a smooth isomorphism on whole $Y$). 
Then we define the domain-wall Dirac operator with the opposite
sign of the mass to the original fermion: $\partial_{\rm DW}=\partial-\kappa m {\rm id}_{S_0}:C^\infty (Y;S_0)\to C^\infty (Y;S_0)$
with a free Dirac operator $\partial$.
As in the case with $D_{\rm DW}$, this new operator
$\partial_{\rm DW}$ also has edge-localized
eigenstates but with opposite chirality $\gamma_\tau=-1$.
Here we assume that $\partial_{\rm DW}$ is invertible,
which is achieved by, for instance, choosing a spin structure
such that the fermion obeys the anti-periodic boundary condition
around a nontrivial cycle on $X$.
We further assume that ${\rm sgn} \det \partial_{\rm DW}\partial_{\rm PV}^{-1}=+1$
with the free Pauli-Villars operator $\partial_{\rm PV}=\partial+ m {\rm id}_{S_0}$.
Namely, the corresponding mod-two index is always trivial.

Now we can decompose the ${\rm sgn}\det(D_{\rm DW}D_{\rm PV}^{-1})$ 
in Eq.~(\ref{eq:maintheorem})
as follows.
\begin{eqnarray}
  {\rm sgn}\det(D_{\rm DW}D_{\rm PV}^{-1}) &=& {\rm sgn}\left[\det(D_{\rm DW}D_{\rm PV}^{-1})\det(\partial_{\rm DW}\partial_{\rm PV}^{-1})\right]
  \nonumber\\
  &=& {\rm sgn}\left[\det\left(
    \begin{array}{cc}
      D_{\rm DW} & 0\\
      0 & \partial_{\rm DW}
      \end{array}\right)
    \det\left(
     \begin{array}{cc}
       D_{\rm PV}& 0 \\
       0 & \partial_{\rm PV}
     \end{array}
     \right)^{-1}\right]
  \nonumber\\
  &=& {\rm sgn}\left[\det D_{\rm edge}\right] {\rm sgn}\left[\det D_{\rm bulk}\right], 
\end{eqnarray}
where $D_{\rm edge/bulk}:C^\infty (Y;S\oplus S_0)\to C^\infty (Y;S\oplus S_0)$ are defined as 
\begin{eqnarray}
 D_{\rm edge}&:=&\left(
    \begin{array}{cc}
      D_{\rm DW} & 0\\
      0 & \partial_{\rm DW}
      \end{array}\right)\left(
    \begin{array}{cc}
      D_{\rm DW} & \mu I\\
      \mu I^{-1}& \partial_{\rm DW}
    \end{array}\right)^{-1},\\
 D_{\rm bulk}&:=&\left(
    \begin{array}{cc}
      D_{\rm DW} & \mu I\\
      \mu I^{-1}& \partial_{\rm DW}
    \end{array}
         \right)\left(
     \begin{array}{cc}
       D_{\rm PV}& 0 \\
       0 & \partial_{\rm PV}
     \end{array}
     \right)^{-1},   
\end{eqnarray}
with a positive constant $\mu$ and a trivial isomorphism
$I={\rm diag}(1,1...):(S_0)_y\to (S)_y$ at each $y\in Y$.
Note that both of $D_{\rm edge}$ and $D_{\rm bulk}$ are real operators,
and therefore, ${\rm sgn}\left[\det D_{\rm edge}\right]$ and
$ {\rm sgn}\left[\det D_{\rm bulk}\right]$
are both well-defined in the same sense as that for
the original operator $D_{\rm DW}D_{\rm PV}^{-1}$.

Now let us take a hierarchical limit $\lambda_{\rm edge}\ll \mu \ll m$,
where $\lambda_{\rm edge}$ denotes a typical energy scale of
the edge localized modes.
In this limit, $\det D_{\rm edge}$ is dominated by
contribution from the edge modes,
since $D_{\rm edge}$ operates as ${\rm id}_{S\oplus S_0}$
upto $\mu/m$ corrections on the bulk modes.
Similarly, $\det D_{\rm bulk}$ is essentially described by
the bulk modes.

It is important to remark here that
$D_{\rm edge/bulk}$ and their signs are not gauge invariant,
due to the new mass term $\mu I$ and its inverse.
Therefore, ${\rm sgn}\left[\det D_{\rm edge}\right]$
depends on the choice of the gauge, and
its gauge transformation can change its sign.
This is exactly what we expect for the global anomaly.
In their product ${\rm sgn}\left[\det D_{\rm edge}\right]{\rm sgn}\left[\det D_{\rm bulk}\right]$,
however, the $\mu$ dependence precisely cancels out and
the total index is gauge invariant.
Now we have manifestly achieved the global anomaly inflow,
decomposing the mod-two APS index:
\begin{eqnarray}
  {\rm Ind}_{\rm APS}(D|_{Y_-}) &=& I_{\rm edge} + I_{\rm bulk}\;\;\;(\mbox{mod 2}),\nonumber\\
  I_{\rm edge} &=& \frac{1-{\rm sgn}\left[\det D_{\rm edge}\right]}{2},\nonumber\\
  I_{\rm bulk} &=& \frac{1-{\rm sgn}\left[\det D_{\rm bulk}\right]}{2},
\end{eqnarray}
where the gauge dependence of $I_{\rm edge}$ is canceled by that of $I_{\rm bulk}$.
Or equivalently, we can say that the gauge invariance of the APS index guarantees the
bulk-edge correspondence of the global anomalies.

\section{Summary and discussion}
\label{sec:summary}

In this work, we gave a mathematical proof that for any APS index
${\rm Ind}_{\rm APS}(D)$ 
of a massless Dirac operator $D$ on a manifold $Y_-$
with boundary $X$, there exists a domain-wall Dirac fermion
determinant, whose sign coincides with $(-1)^{{\rm Ind}_{\rm APS}(D)}$.

Our domain-wall fermion Dirac operator is formulated
on a closed manifold extended from $Y_-$.
Instead, the mass term flips its sign at the original location of $X$.
Unlike the original APS setup, where an unphysical boundary condition
is needed to keep the chiral symmetry and edge localized modes
are not allowed to exist, the domain-wall fermion
keeps many essential features to understand the physics of topological matters.
No specific boundary condition is imposed {\it a priori},
but a local and physically sensible one having rotational symmetry
is automatically imposed on the domain-wall.
The distinction of the massless edge-localized modes and the bulk massive modes is manifest.
Moreover, we find a natural decomposition of the
mod-two APS index into edge and bulk contributions.
Each of them is given by a non-gauge invariant integer,
and therefore, contains a global anomaly.
The gauge invariance of the mod-two APS index guarantees
its cancellation or the bulk-edge correspondence of the global anomalies.
Thus, our theorem indicates that the domain-wall fermion determinant
(with Pauli-Villars regularization)
can be used as a physicist-friendly ``reformulation'' of the mod-two APS index.

The mathematical proof was given introducing a higher ($d+2$)-dimensional
Dirac operator $\hat{D}_m$\footnote{The physical role of $(d+2)$-dimensional Dirac operator was also discussed in our previous work \cite{Fukaya:2019qlf}.},
whose mod-two index is equal to the original ${\rm Ind}_{\rm APS}(D)$
and also equal to the spectral flow of a skew-adjoint operator $D_t$,
which coincides with $(1-{\rm sgn}\det(D_{\rm DW}D_{\rm PV}^{-1}))/2$.
What is the physical meaning of $\hat{D}_m$?
An interesting observation is that ${\rm Ind}(\hat{D}_m)$ is
equal to ${\rm Ind}_{\rm APS}(\hat{D}_m|_{Z_-})$, where $Z_-=Y\times [-1,1]$.
Then, denoting $Z=Y \times \mathbb{R}$ and $Z_+=Z\backslash Z_-$,
and introducing $\rho:Z\to [-1,1]$ by $\rho\equiv \pm 1$ on $Z_\pm\backslash Y$,
we can recursively use our main theorem
to obtain
\begin{eqnarray}
{\rm Ind}(\hat{D}_m) &=& \frac{1-{\rm sgn}\left[\det (\hat{D}_{\rm DW}\hat{D}_{\rm PV}^{-1})\right]}{2},
\end{eqnarray}  
where $\hat{D}_{\rm DW/PV}:C^\infty (Z;S\oplus S)\to C^\infty (Z;S\oplus S)$
are defined by $\hat{D}_{\rm DW}=\hat{D}_m+M\rho {\rm id}_{S\oplus S}$
and $\hat{D}_{\rm PV}=\hat{D}_m+M {\rm id}_{S\oplus S}$ respectively,
with a positive constant $M$, which is sufficiently larger than $m$.
This new domain-wall fermion Dirac operator
\begin{eqnarray}
 \hat{D}_{\rm DW}&:=&\left(
    \begin{array}{cc}
      \partial_t+M\rho {\rm id}_S& D+m\hat{\kappa}{\rm id}_S\\
     D-m\hat{\kappa}{\rm id}_S & -\partial_t+M\rho {\rm id}_S
    \end{array}\right)
\end{eqnarray}
in the large $m$ and $M$ limits,
has a ``edge-of-edge'' solution, whose asymptotic
behavior near $(\tau,t)=(0,1)$ is given by
\begin{eqnarray}
  \Psi(x,\tau,t) &=& \Phi(x)\exp(-m|\tau|)\exp(-M|t-1|),\nonumber\\
      {\rm id}\otimes\gamma_\tau \Phi(x)&=&\Phi(x),\;\;\; \gamma\otimes{\rm id}_{S}  \Phi(x)=-\Phi(x),\;\;\; D\Phi(x)=0.
\end{eqnarray}
Thus, our domain-wall fermion formulation naturally
contains a mathematical structure
that gapless states appear at a boundary of the system of codimension larger than one
\cite{Neuberger:2003yg,Fukaya:2016ofi,Hashimoto:2017tuh},
which may be useful to understand the physics of higher-order topological insulators \cite{hoti1,Schindler:2017etn}.

Another interesting application is
the formulation in lattice gauge theory\footnote{
For the standard APS index, index a lattice formulation was proposed in \cite{Fukaya:2019myi} using
the Wilson Dirac operator.
Here we consider the mod-two version.}.
On a flat Euclidean lattice with periodic boundary conditions,
of which continuum limit corresponds to $T^{d+1}$,
we can construct a lattice Dirac operator having the
same properties as $D_{\rm DW}$ above.
For example, in the $SU(2)$ gauge theory on a hyper-cubic 5-dimensional lattice $Y^{\rm lat}=L^5$,
the domain-wall Dirac operator $D_{\rm DW}^{\rm lat}:Y^{\rm lat}\otimes S^{\rm lat}\to Y^{\rm lat}\otimes S^{\rm lat}$
on a fermion field in the fundamental representation denoted by $S^{\rm lat}$
can be defined as
\begin{eqnarray}
D_{\rm DW}^{\rm lat}(x,y) &=& D_W(x,y) + \kappa m {\rm id}_{S^{\rm lat}}\delta_{x,y},
\end{eqnarray}  
where $x=(x_1,x_2,x_3,x_4,x_5)$ and $y=(y_1,y_2,y_3,y_4,y_5)$  represent discrete lattice points on $Y^{\rm lat}$,
$\kappa={\rm sgn}(x_5+1/2){\rm sgn}(L/2-x_5-1/2)$, the mass is in a range $0<m<2$ (to avoid contribution from doublers),
and $D_W(x,y)$ is the Wilson Dirac operator
\begin{eqnarray}
  D_W&=& \sum_{\mu=1}^5\gamma_\mu \frac{\nabla^f_\mu+\nabla^b_\mu}{2} -\sum_{\mu=1}^5\frac{\nabla^f_\mu\nabla^b_\mu}{2},\nonumber\\
  \nabla^f_\mu(x,y)&=& U_\mu(x)\delta_{x+1,y}-\delta_{x,y},\nonumber\\
  \nabla^b_\mu(x,y)&=& \delta_{x,y}-U_\mu^\dagger(y)\delta_{x-1,y}.
\end{eqnarray}
Here we take the lattice spacing unity.
Note that the link variables $U_\mu(x)$ in the fundamental representation of $SU(2)$
is pseudo-real: $U_\mu (x)^*=\mathcal{E} U_\mu (x)\mathcal{E}$
with the second Pauli matrix $\mathcal{E}=i\tau_2$, which is anti-symmetric.
This is also the case for $\gamma_\mu^* = C\gamma_\mu C$ with $C=\gamma_2\gamma_4\gamma_5$ (for the chiral representation),
which is also anti-symmetric.
Therefore, $D_{\rm DW}^{\rm lat}$ is real: $(D_{\rm DW}^{\rm lat})^*=C\mathcal{E}D_{\rm DW}^{\rm lat}C\mathcal{E}$.
Then we can ``define'' the mod-two APS index on the lattice by
\begin{eqnarray}
  \frac{1-{\rm sgn}\left[\det (D_{\rm DW}^{\rm lat})\right]}{2}\;\;\;\mbox{mod 2},
\end{eqnarray}
and it is natural to conjecture that this lattice index
for sufficiently large $L$ and smooth link variables
coincides with the continuum one on  $T^{4}\times [-L/2,0]$.
Note that the application to the mod-two AS index is straightforward, setting
$\kappa=-1$ to define the mod-two AS index on $T^5$ by
\begin{eqnarray}
  \frac{1-{\rm sgn}\left[\det (D_W - m {\rm id}_{S^{\rm lat}})\right]}{2}\;\;\;\mbox{mod 2}.
\end{eqnarray}

\begin{acknowledgements}
We thank N.~Kawai, Y.~Kikukawa, Y.~Kubota and K.~Yonekura for useful discussions.
This work was supported in part by JSPS KAKENHI (Grant numbers: JP15K05054, JP17H06461, JP17K14186, JP18H01216, JP18H04484,
JP18K03620, 19J20559 and 20K14307).
\end{acknowledgements}

\appendix
\section{An alternative proof of the main theorem}
In this appendix, we sketch an alternative simpler proof for the main theorem (Theorem \ref{theorem: main theorem}). 
The proof given here does not rely on the techniques developed in our previous work \cite{Fukaya:2019qlf}, and is self-contained. 
In particular, we do not use the embedding of cylinder $Y_{\rm cyl}:= Y_- \cup [0,+\infty)\times X$ into $\R \times Y$ or the localization argument.  

We use the same notations as in subsection \ref{subsec_proof}. 
We proceed in the same way to get \eqref{eq_prf_main_sf}, and using Proposition \ref{prop_APS=cyl}, we are left to prove the equality
\begin{align}
    \mathrm{Sf}(\{D_t\}_{t \in [-1, 1]}) = \mathrm{Ind}(D_{\rm cyl}). 
\end{align}
We proceed in a different way from here. 

First consider the following three operators acting on $L^2(Y_{\rm cyl}; S_{\rm cyl} \oplus S_{\rm cyl})$, 
\begin{eqnarray}
  D_{{\rm cyl}, -1} &:= \left(
  \begin{array}{cc}
    0 & D_{\rm cyl}+m  {\rm id}_S\\
    D_{\rm cyl}-m {\rm id}_S & 0
\end{array}
  \right), \\
  D_{{\rm cyl}, 0} &:= \left(
  \begin{array}{cc}
    0 & D_{\rm cyl}-m  {\rm id}_S\\
    D_{\rm cyl}+m {\rm id}_S & 0
\end{array}
  \right), \\
  D_{{\rm cyl}, 1} &:= \left(
  \begin{array}{cc}
    0 & D_{\rm cyl}+m \kappa^{\rm sm}_{\rm cyl} {\rm id}_S\\
    D_{\rm cyl}-m \kappa^{\rm sm}_{\rm cyl} {\rm id}_S & 0
\end{array}
  \right). 
\end{eqnarray}  
Here $\kappa_{\rm cyl}^{\rm sm} \colon Y_{\rm cyl} \to [-1, 1]$ is a smooth function with $\kappa_{\rm cyl}^{\rm sm} \equiv -1$ on $Y_-$ and $\kappa_{\rm cyl}^{\rm sm} \equiv 1$ on $X \times (4, +\infty)$. 
Let $\{D_{{\rm cyl}, t}^s\}_{(s, t) \in [0, 1] \times [-1, 1]}$ denote the two-parameter family of operators defined as 
\begin{align*}
    D_{{\rm cyl}, t}^s := \frac{1-t}{2}D_{{\rm cyl}, -1} + \frac{(1-s) (1+t)}{2}D_{{\rm cyl}, 0} + \frac{s(1+t)}{2}D_{{\rm cyl}, 1}. 
\end{align*}
This family consists of real and formally skew-adjoint operators. 
Moreover, by the invertibility of $D_X$ each operator is Fredholm, and the family is continuous, by the same argument as that in \cite[Section 2]{Bunke1994}. 
We regard this as a path, parameterized by $s \in [0, 1]$, of paths $\{D_{{\rm cyl}, t}^s\}_{t \in [-1, 1]}$ of real skew-adjoint Fredholm operators. 
Obviously, $D_{{\rm cyl}, -1}^s = D_{{\rm cyl}, -1} $ is invertible. 
Using the invertibility of $D_X$, for $m$ large enough, we can also see that $D^s_{{\rm cyl}, 1}$ are all invertible for all $s \in [0, 1]$: 
this can be shown in the same way as \cite[Proposition 9]{Fukaya:2019qlf}. 
Thus, by the deformation invariance of spectral flows, we get
\begin{align}
    \mathrm{Sf}(\{D^0_{{\rm cyl}, t}\}_{t \in [-1, 1]}) = \mathrm{Sf}(\{D^1_{{\rm cyl}, t}\}_{t \in [-1, 1]}). 
\end{align}
Moreover, at $s = 0$, we see directly from the definition of spectral flow that
\begin{align}
    \mathrm{Sf}(\{D^0_{{\rm cyl}, t}\}_{t \in [-1, 1]}) =\mathrm{Ind}(D_{\rm cyl}) . 
\end{align}
So we get
\begin{align}
    \mathrm{Sf}(\{D^1_{{\rm cyl}, t}\}_{t \in [-1, 1]}) =\mathrm{Ind}(D_{\rm cyl}) . 
\end{align}
Note that, restricted on the cylindrical end $X \times (4, \infty)$, the family $\{D^1_{{\rm cyl}, t}\}_t$ does not depend on $t$. 

In order to pass to the closed manifold $Y$, we consider the manifold $Y_{+, \rm cyl} :=(-\infty, 0)\times X \cup Y_+$ with the corresponding bundle $S_{+, {\rm cyl}}$ and $D_{+ , {\rm cyl}}$. 
Let $\{D_{+, {\rm cyl}, t}\}_{t \in [-1, 1]}$ the constant family of operators on $L^2(Y_{+, \rm cyl} ; S_{+, \rm cyl} \oplus S_{+, \rm cyl})$ defined by
\begin{eqnarray}
  D_{+, {\rm cyl}, t} &:= \left(
  \begin{array}{cc}
    0 & D_{+, \rm cyl}+m  {\rm id}_S\\
    D_{+, \rm cyl}-m {\rm id}_S & 0
\end{array}
  \right). 
\end{eqnarray}
Of course we have
\begin{align}
    \mathrm{Sf}(\{D_{+, {\rm cyl}, t}\}_{t \in [-1, 1]}) =0. 
\end{align}

By the gluing property of Fredholm index, we can show the corresponding gluing formula for mod-two spectral flows\footnote{
One simplest way to show the gluing of spectral flows here is to use Proposition \ref{prop_APS=sf}. 
Using it, we can reduce the problem to the gluing property of indices of operators on $Y_- \times \R$ and $Y_+ \times \R$, which is standard. }. 
If we glue the family $\{D^1_{{\rm cyl}, t}\}_{t \in [-1, 1]}$ and $\{D_{+, {\rm cyl}, t}\}_{t \in [-1, 1]}$ along $X$, 
we get the family $\{D_t\}_{t \in [-1, 1]}$ on $L^2(Y; S \oplus S)$ defined in \eqref{eq_Dt}, and get
\begin{align}
\mathrm{Sf}(\{D_t\}_{t \in [-1, 1]})
= \mathrm{Sf}(\{D^1_{{\rm cyl}, t}\}_{t \in [-1, 1]}) + \mathrm{Sf}(\{D_{+, {\rm cyl}, t}\}_{t \in [-1, 1]}) 
= \mathrm{Ind}(D_{\rm cyl}) . 
\end{align}
So the proof is complete. 

\begin{remark}
The authors came up with this simpler proof while writing this paper. 
We can also prove the main theorem of our previous work \cite{Fukaya:2019qlf} in a similar way. 

Actually, the two proofs are essentially the same. 
The relation between them can be understood by comparing Figures \ref{fig_1}, \ref{fig_2} and \ref{fig_3}. 
Figure \ref{fig_1} corresponds to the proof in the Appendix, and Figure \ref{fig_3} corresponds to that in subsection \ref{subsec_proof}. 
On the pink regions we have $\kappa=1$, and on the white regions we have $\kappa = -1$. 
In the proofs, we identified the mod-two APS index with the mod-two spectral flows between operators defined on red and blue parts. 
The fact that spectral flows in Figures \ref{fig_1} and \ref{fig_3} coincide can be understood by moving the red and blue manifolds in the way shown in Figure \ref{fig_2}. 
\end{remark}

\begin{figure}[htbp]
\begin{center}
  \begin{tikzpicture}
    \pic {infiniteCylinderX};
    \filldraw [red, nearly transparent] (6,1) -- (0,1) -- (0,0) -- (6,0);
    \filldraw [gray] (6,0.125) -- (0.125,0.125) -- (0.125,0.5335) arc [start angle=-60, end angle=240, radius=0.25] -- (-0.125,-0.125) -- (6,-0.125);

    \draw [blue] [ultra thick, rounded corners] (1,1) -- (1,0.5) -- (6,0.5);
    \draw [red] [ultra thick, rounded corners] (-1,1) -- (-1,-0.5) -- (6,-0.5);
  \end{tikzpicture}
  \caption{The proof in the Appendix}
  \label{fig_1}
\end{center}

\begin{center}
  \begin{tikzpicture}
    \pic {infiniteCylinderX};
    \filldraw [red, nearly transparent] (6,1) -- (0,1) -- (0,0) -- (6,0);
    \filldraw [gray] (6,0.125) -- (0.125,0.125) -- (0.125,0.5335) arc [start angle=-60, end angle=240, radius=0.25] -- (-0.125,-0.125) -- (6,-0.125);

    \draw [blue] [ultra thick, rounded corners] (1,1) -- (1,-0.4) -- (6,-0.4);
    \draw [red] [ultra thick, rounded corners] (-1,1) -- (-1,-0.5) -- (6,-0.5);
  \end{tikzpicture}
  \caption{}
  \label{fig_2}
\end{center}

\begin{center}
  \begin{tikzpicture}
    \pic {infiniteCylinderX};
    \filldraw [red, nearly transparent] (6,1) -- (0,1) -- (0,0) -- (6,0);
    \filldraw [gray] (6,0.125) -- (0.125,0.125) -- (0.125,0.5335) arc [start angle=-60, end angle=240, radius=0.25] -- (-0.125,-0.125) -- (6,-0.125);

    \draw [blue] [ultra thick, rounded corners] (1,1) -- (1,-1);
    \draw [red] [ultra thick, rounded corners] (-1,1) -- (-1,-1);
  \end{tikzpicture}
  \caption{The proof in subsection \ref{subsec_proof}}
  \label{fig_3}
\end{center}

\end{figure}


\if0

Your text comes here. Separate text sections with
\section{Section title}
\label{sec:1}
Text with citations \cite{RefB} and \cite{RefJ}.
\subsection{Subsection title}
\label{sec:2}
as required. Don't forget to give each section
and subsection a unique label (see Sect.~\ref{sec:1}).
\paragraph{Paragraph headings} Use paragraph headings as needed.
\begin{equation}
a^2+b^2=c^2
\end{equation}

\begin{figure}
  \includegraphics{example.eps}
\caption{Please write your figure caption here}
\label{fig:1}       
\end{figure}
%
\begin{figure*}
  \includegraphics[width=0.75\textwidth]{example.eps}
\caption{Please write your figure caption here}
\label{fig:2}       
\end{figure*}
%
\begin{table}
\caption{Please write your table caption here}
\label{tab:1}       
\begin{tabular}{lll}
\hline\noalign{\smallskip}
first & second & third  \\
\noalign{\smallskip}\hline\noalign{\smallskip}
number & number & number \\
number & number & number \\
\noalign{\smallskip}\hline
\end{tabular}
\end{table}
\fi


%
%



\end{document}